\documentclass[11pt,letterpaper]{article}

\usepackage{libertine}
\usepackage{fullpage}
\usepackage{booktabs} 
\usepackage{amsmath,amsfonts,amssymb,bbm,amsthm} 

\usepackage{graphicx}
\usepackage{subcaption}
\usepackage{verbatim}
\usepackage{url}
\usepackage{multirow}
\usepackage{enumitem}

\usepackage{hyperref}
\usepackage[svgnames]{xcolor}
\usepackage[capitalise,nameinlink]{cleveref}
\hypersetup{colorlinks={true},linkcolor={DarkBlue},citecolor=[named]{DarkGreen}}

\usepackage{natbib}


\newcommand{\R}{\mathbb{R}}
\newcommand{\X}{\mathbb{X}}

\newcommand{\xx}{\mathbf{x}}
\newcommand{\sss}{\mathbf{s}}
\newcommand{\yy}{\mathbf{y}}

\newcommand{\dd}{\mathbf{d}}

\newcommand{\SW}{\text{\normalfont SW}}
\newcommand{\LW}{\text{\normalfont LW}}

\newcommand{\lpoa}{\text{\normalfont LPoA}}
\newcommand{\poa}{\text{\normalfont PoA}}

\newcommand{\calG}{\mathcal{G}}
\newcommand{\calQ}{\mathcal{Q}}
\newcommand{\calC}{\mathcal{C}}
\newcommand{\ud}{\,\mathrm{d}}

\newcommand{\zero}{\mathbf{0}}

\newcommand{\eq}{\text{eq}}

\newcommand{\prop}{\text{Kelly}}
\newcommand{\SH}{\text{SH}}

\newcommand{\PYS}{\text{PYS}}

\newcommand{\EPYS}{\text{E2-PYS}}
\newcommand{\SR}{\text{E2-SR}}
\newcommand{\SHSR}{\text{SH-SR}}

%


\newtheorem{theorem}{Theorem}[section]
\newtheorem{lemma}[theorem]{Lemma}

\sloppy

\title{\bf The efficiency of resource allocation mechanisms\\for budget-constrained users}  

\author{Ioannis Caragiannis \\ Department of Computer Engineering and Informatics, University of Patras, Greece  
\and Alexandros A. Voudouris \\ Department of Computer Science, University of Oxford, UK}
\date{}

\begin{document}

\maketitle

\begin{abstract}
We study the efficiency of mechanisms for allocating a divisible resource. Given scalar signals submitted by all users, such a mechanism decides the fraction of the resource that each user will receive and a payment that will be collected from her. Users are self-interested and aim to maximize their utility (defined as their value for the resource fraction they receive minus their payment). Starting with the seminal work of Johari and Tsitsiklis [Mathematics of Operations Research, 2004], a long list of papers studied the price of anarchy (in terms of the social welfare --- the total users' value) of resource allocation mechanisms for a variety of allocation and payment rules. Here, we further assume that each user has a budget constraint that invalidates strategies that yield a payment that is higher than the user's budget. This subtle assumption, which is arguably more realistic, constitutes the traditional price of anarchy analysis meaningless as the set of equilibria may change drastically and their social welfare can be arbitrarily far from optimal. Instead, we study the price of anarchy using the liquid welfare benchmark that measures efficiency taking budget constraints into account. We show a tight bound of $2$ on the liquid price of anarchy of the well-known Kelly mechanism and prove that this result is essentially best possible among all multi-user resource allocation mechanisms. This comes in sharp contrast to the no-budget setting where there are mechanisms that considerably outperform Kelly in terms of social welfare and even achieve full efficiency. In our proofs, we exploit the particular structure of worst-case games and equilibria, which also allows us to design (nearly) optimal two-player mechanisms by solving simple differential equations.
\end{abstract}

\maketitle

\section{Introduction}\label{sec:intro}
Resource allocation is an ubiquitous task in computing systems and often sets algorithmic challenges to their design. As such, resource allocation problems have received much attention by the algorithmic community for decades. The recent emergence of large-scale distributed systems with non-cooperative users that compete for access to scarce resources has led to game-theoretic treatments of resource allocation.

In this paper, we study a particular simple class of {\em resource allocation mechanisms} that aim to distribute a divisible resource (such as bandwidth of a communication link, CPU time, storage space, etc.) by auctioning it off to different users as follows. Each user is asked to submit a scalar {\em signal}. Given the submitted signals, the mechanism decides the fraction of the resource that will be allocated to each user, as well as the payment that will be received from each of them. A typical example is a mechanism that has been proposed by \citet{K97} (henceforth called the Kelly mechanism; see also \citep{KMT98}), according to which the fraction of the resource allocated to each user is proportional to the user's signal, and the signal itself is her payment.

The users are self-interested. Following the standard modeling assumptions in the related literature, the value of each user for a resource fraction is given by a private {\em valuation function}. The above definition of resource allocation mechanisms allows the users to act strategically in the sense that the signal they select to submit is such that their utility (value for the fraction of the resource they receive minus payment) is maximized. Naturally, this behavior defines a {\em strategic game} among the users, who act as players. Soon after the definition of the Kelly mechanism, a series of papers studied the existence and uniqueness of {\em pure Nash equilibria} (snapshots of player strategies, in which the signal of each player maximizes her own utility) of the induced games \citep{HG02,LA00,MB03} and quantified their inefficiency \citep{JT04} using the notion of the {\em price of anarchy}~\citep{KP99}.

In particular, \citet{JT04} used the {\em social welfare} (i.e., the total value of the players for their received fraction of the resource) as an efficiency benchmark and proved that the social welfare at any equilibrium is at least $3/4$ times the optimal social welfare. This translates into a price of anarchy bound of $4/3$, which is tight. The paper of \citet{JT04} sparked subsequent research on other resource allocation mechanisms, that use different (non-proportional) allocation rules or payments. 

A first apparent question was whether improved price of anarchy bounds are possible by changing the allocation function, but keeping the simple pay-your-signal (or PYS, for short) payment rule. \citet{SH04} showed that no PYS mechanism has price of anarchy better than $8/7$, designed an allocation function that achieves this bound for two players, and provided strong experimental evidence that a slightly inferior bound holds for arbitrarily many players. Surprisingly, full efficiency at equilibria (i.e., a price of anarchy equal to $1$) is possible via different allocation/payment functions. This discovery was made in three independent papers by \citet{MB06}, \citet{YH07}, and \citet{JT09}. The mechanism of \citet{MB06} uses proportional allocation (but a different payment; see Section \ref{sec:prelim} for its description), while the mechanisms of \citet{JT09} and \citet{YH07} are adaptations of the well-known VCG paradigm (see also the survey by \citet{J07} on these results).

Our focus in the current paper is on the ---arguably, more realistic--- setting, in which each player has a private {\em budget} that constrains the payments that she can afford and, consequently, narrows her strategy space. As resource allocation mechanisms do not have direct access to budgets, budget constraints can restrict the set of equilibria so that their social welfare is extremely low compared to the optimal social welfare, which in turn is not related to player strategies, payments, or budgets. An efficiency benchmark that is suitable for budget-constrained players is known as {\em liquid welfare} (introduced by \citet{DPL14} and, independently, by \citet{ST13} who call it {\em effective} welfare) and is obtained by slightly changing the definition of the social welfare, taking budgets into account. Informally, the liquid welfare is the total value of the players for the resource fraction they receive, with the value of each player capped by her budget. 

\citet{DPL14} provide the following justification of liquid welfare as a natural extension of a revenue-motivated interpretation of social welfare. In the no-budget setting, besides simply viewing the social welfare as the total utility of all players and the mechanism combined, we can also interpret it as an upper bound on the revenue that the mechanism can hope to gain by ``selling'' fractions of the resource to the players. In the setting with budgets, the liquid welfare admits the same interpretation. If the value of a player exceeds her budget, then the maximum revenue is the budget (which restricts what she can afford). On the other hand, if the budget exceeds her value, then the value is the maximum revenue as if there was no budget constraint. Combining these together, we get that the maximum revenue is the minimum between these two, which gives rise to the definition of liquid welfare, by summing over all players. In addition, \citet{DPL14} compare liquid welfare to other benchmarks previously used in the  literature. 

Following the recent paper of \citet{AFGR15}, we use the term {\em liquid price of anarchy} (and abbreviate it as $\lpoa$) to refer to the price of anarchy with respect to the liquid welfare, i.e., the ratio between the optimal liquid welfare of a game induced by a resource allocation mechanism and the worst liquid welfare over all equilibria of the game.

\paragraph{Our results and techniques.}
We aim to explore {\em all} resource allocation mechanisms and find the mechanism with the best possible $\lpoa$. Our results suggest a drastically different picture compared to the no-budget setting. First, the analogue of full efficiency is not achievable; we show a lower bound of $2-1/n$ on the $\lpoa$ of any $n$-player resource allocation mechanism (under standard technical assumptions for player valuations and mechanism characteristics; see Section \ref{sec:prelim}). The Kelly mechanism is proved to have an almost best possible $\lpoa$ of exactly $2$. In contrast, the mechanism of \citet{SH04} (henceforth called $\SH$) has an $\lpoa$ of $3$. Improved bounds are possible for two players. We design the two-player PYS resource allocation mechanism $\EPYS$ that has an $\lpoa$ of $1.792$; this bound is optimal among a very broad class of mechanisms. We also design the two-player mechanism $\SR$ that achieves an almost optimal $\lpoa$ bound of at most $1.529$; this mechanism uses different payments. We also present a simpler $2$-player mechanism, which combines the allocation function of SH and the payment function of $\SR$, and achieves a tight $\lpoa$ bound of $\phi=1.618$. See Table \ref{tab:results} for a summary of our results.

\begin{table}[t]
\centering
\begin{tabular}{l c p{11cm}}
\noalign{\hrule height 1pt}\hline
Mechanism 	& LPoA  & Comment \\
\noalign{\hrule height 1pt}\hline
all     & $\geq 2-1/n$ & No mechanism can achieve full efficiency (Theorem \ref{thm:lower-bound}) \\
$\prop$ & $2$ &  Tight bound; almost optimal among all $n$-player mechanisms (Theorem \ref{thm:proportional-upper}) \\
$\SH$ 	& $3$ & Tight bound (Theorems \ref{thm:SH-upper} and \ref{thm:SH-lower}) \\
$\EPYS$ & $1.792$ &  Tight bound (Theorem \ref{thm:E2-PYS-bound}); optimal among all 2-player $\PYS$ mechanisms with concave allocation functions (Theorem \ref{thm:PYS-optimality}) \\
$\SR$   & $\leq 1.529$ &  Almost optimal among all 2-player mechanisms (Theorem \ref{thm:RS-bound}) \\
$\SHSR$ & $1.618$ &  Tight bound (Theorem \ref{thm:SHSR-bound}) \\
\noalign{\hrule height 1pt}\hline
\end{tabular}
\caption{Summary of our liquid price of anarchy bounds for resource allocation mechanisms for budget-constrained users. The last three mechanisms are specifically for two players. The term ``tight bound'' means that the analysis of the corresponding mechanism is tight.}
\label{tab:results}
\end{table}  

Our results exploit a particular structure of worst-case (in terms of $\lpoa$) games and their equilibria.  We prove that for every resource allocation mechanism, the worst-case $\lpoa$ is obtained at instances in which players have {\em affine} valuation functions. In addition, all players besides one have finite budgets and play strategies that imply payments that are either zero or equal to their budget, while a single player has infinite budget and a signal that nullifies the derivative of her utility. Compared to an analogous characterization for the no-budget case (with linear valuation functions and player signals that all nullify their utility derivatives), first observed by \citet{JT04} for the Kelly mechanism and later extended to all resource allocation mechanisms, the structure in our characterization is much richer and the proof is considerably more complicated. The characterization contains so much information that the $\lpoa$ bounds follow rather easily; the extreme example is the proof of our best $\lpoa$ bound of $2$ for the Kelly mechanism which is only a few lines long. It can also be used in the design of new mechanisms; for example, the design and analysis of our two-player mechanisms $\EPYS$ and $\SR$ follow by simple first-order differential equations, which would never have been identified without our characterization. And, furthermore, under assumptions about the resource allocation mechanisms (e.g., concave allocations and convex payments), the $\lpoa$ bound is automatically proved to be tight without providing any explicit lower bound instance.

\paragraph{Other related work.} As an efficiency benchmark, liquid welfare has been studied recently in different contexts such as in the design of truthful mechanisms (see \citep{DPL14,LX15,LX17}) and in the analysis of combinatorial Walrasian equilibria with budgets \citep{DEF+16}. In the context of the price of anarchy, it was considered recently in simultaneous first price auctions by \citet{AFGR15} and in position auctions by \citet{V18}.

\citet{CV16} were the first to prove that the liquid price of anarchy of the Kelly mechanism is constant. In particular, they showed $\lpoa$ upper and lower bounds of $2.78$ and $2$, respectively. The lower bound is essentially proved again here (see Theorem \ref{thm:proportional-upper}) with a completely different and more interesting technique. \citet{CST16} improved the $\lpoa$ upper bound to $2.618$ and extended the results to more general settings involving multiple resources. Prior to these two papers, \citet{ST13} proved that the social welfare at equilibria of the Kelly mechanism is at most a constant factor away from the optimal liquid welfare. In contrast to the analysis techniques in the current paper, the analysis of the Kelly mechanism by \citet{CV16}, \citet{CST16} and \citet{ST13} is closer in spirit to the smoothness template (see \citep{R15,RST16}) and is based on bounding the utility of each player by the utility she would have when deviating to appropriate signal strategies. Their results hold for more general equilibrium concepts such as coarse-correlated or Bayes-Nash equilibria. Our $\lpoa$ bounds in the current paper hold specifically for pure Nash equilibria, but are superior and tight.

\paragraph{Roadmap.} The rest of the paper is structured as follows. We begin with definitions and notation in Section \ref{sec:prelim}. Our lower bound on the $\lpoa$ of any resource allocation mechanism appears in Section \ref{sec:lower-bound}. Section~\ref{sec:structure} is devoted to proving the structural characterization of worst-case resource allocation games and equilibria. Then, in Section \ref{sec:pay-your-signal} we present tight bounds on the liquid price of anarchy for the Kelly and SH mechanisms. In Section \ref{sec:two-players}, we present our two-player mechanisms $\EPYS$, $\SR$ and $\SHSR$. We conclude with open problems and a discussion on possible extensions in Section \ref{sec:extensions}.


\section{Definitions and notation}\label{sec:prelim} 
We consider a single divisible resource of unit size that is distributed among $n$ users by a {\em resource allocation mechanism} $M$.
The mechanism $M$ consists of
\begin{itemize}
\item an {\em allocation function} $g^M:\R^n_{\geq 0} \rightarrow \calQ \cup \zero$, where $\calQ = \left\{ \dd \in [0,1]^n: \sum_{i=1}^n d_i =1 \right\}$ is the unit $n$-simplex and $\zero = (0, ..., 0)$, and
\item a {\em payment function} $p^M:\R^n_{\geq 0} \rightarrow \R_{\geq 0}^n$,
\end{itemize}
and works as follows. Each user $i$ submits a {\em signal} $s_i \in \R_{\geq 0}$, and the mechanism $M$ allocates a fraction of $g^M_i(\sss)$ of the resource to each user $i$ and asks her for a payment of $p^M_i(\sss)$, where $\sss=(s_1,...,s_n)$ denotes the vector formed by all signals. Clearly, $g_i^M$ and $p_i^M$ are the $i$-th entries of vectors $g^M$ and $p^M$, respectively.

Some important properties of allocation and payment functions are as follows:
\begin{itemize}
\item They are {\em anonymous}: any permutation of the entries of the input signal vector results in the same permutation of the output. So, all users get equal resource shares and are asked for equal payments when they submit identical signals;
\item The mechanism does not allocate any fraction and does not ask for any payment from a user that submits a zero signal;
\end{itemize}   
Let $(y,\sss_{-i})$ denote the signal vector in which user $i$ has signal $y$ and the remaining users have their signals as in $\sss$. Viewed as univariate functions (of variable $y$), the functions $g_i^M(y,\sss_{-i})$ and $p_i^M(y,\sss_{-i})$ are {\em increasing} (with the exception of $(y,\sss_{-i})=(y,\zero_{-i})$ where $g^M_i(y,\zero_{-i})=1$) and {\em differentiable} in $\R_{\geq0}$ (with the exception of $(y,\sss_{-i})=\zero$). The above are characteristics that all known mechanisms in the resource allocation literature have (e.g., see~\citep{J07,JT04,JT09,MB03,SH04,YH07}). 

Each user $i$ has 
\begin{itemize}
\item a monotone non-decreasing, concave, and differentiable {\em valuation function} $v_i:[0,1]\rightarrow \R_{\geq 0}$, \footnote{Note that we do not assume that the valuation functions are normalized and it might be the case that $v_i(0)\neq 0$ for some player $i$, which is quite uncommon in the related literature. However, observe that any concave function $v$ with an offset of $\alpha > 0$ (i.e., $v(0)=\alpha$) can be approximated by another concave function $\tilde{v}$ with no offset (i.e., $\tilde{v}(0)=0$) but slope large enough so that $\tilde{v}(\epsilon)=v(\epsilon)$ for some arbitrarily small $\epsilon > 0$; $x \geq \epsilon$, the two functions $v$ and $\tilde{v}$ have values that coincide.} so that $v_i(x)$ represents the value that user $i$ has for a resource fraction of $x$;

\item a {\em private budget} $c_i \in \R_{\geq 0}\cup\{ +\infty \}$, which restricts (upper-bounds) her payment to the mechanism. 
\end{itemize}
Her {\em utility} from the mechanism is defined as the value she gets for the fraction she is given minus her payment, i.e., 
$$u^M_i(\sss) = v_i\left(g_i^M(\sss)\right) - p_i^M(\sss).$$ 
To capture the fact that budgets impose hard constraints to the users, we technically assume that $u^M_i(\sss)=-\infty$ when $p_i^M(\sss)>c_i$. 

The users act strategically as utility maximizers and, therefore, engage as players into a strategic {\em resource allocation game} $\calG^M$ that is {\em induced} by mechanism $M$. A (pure Nash) {\em equilibrium} is a signal vector $\sss$ such that, when viewed as a univariate function of variable $y$, $u^M_i(y,\sss_{-i})$ is maximized for $y=s_i$, i.e., no player can increase her utility by unilaterally deviating to submitting a different signal. We denote by $\eq(\calG^M)$ the set of all equilibria of game $\calG^M$. 

Due to the budget constraints, we have three different cases for the strategy of player $i$ at an equilibrium $\sss\in \eq(\calG^M)$ (assuming a non-trivial budget $c_i>0$) and for the corresponding value of the derivative of her utility. In particular, the derivative $\left.\frac{\partial u^M_i(y,\sss_{-i})}{\partial y}\right|_{y=s_i}$ is equal to zero in case $s_i$ is such that $0<p^M_i(\sss)<c_i$,  non-positive in case $s_i=0$, and non-negative in case $s_i$ is such that $p^M_i(\sss)=c_i$. Note that nullification of the utility derivative does not necessarily imply maximization of utility.

We argue, using the definition and properties of the allocation and payment functions, that signal vectors with at most one positive entry cannot be equilibria. This is clearly the case for the signal vector $\zero$. Indeed, any player $i$ with positive budget and valuation function that takes positive values has the incentive to unilaterally deviate to submitting a sufficiently small signal, so that the mechanism gives her the whole resource and asks for a payment of, say, $v_i(1)/2$; this is due to the continuity of the payment function and the fact that it is equal to zero on input the signal vector $\zero$. Actually, player $i$ has an incentive to further decrease her payment. At some point, she will deviate to a signal $\delta>0$ so that the payment of another player $j$ when both $i$ and $j$ have a signal equal to $\delta$ is, say, strictly less than $v_j(1/2)$. At that point, player $j$ will also deviate to the positive signal $\delta$ and get half of the resource and positive utility.\footnote{In general, we consider games in which at least two players have strictly positive budgets and valuation functions that take non-zero values.} We use $\X_n$ as an abbreviation of the subset of $\R^n_{\geq 0}$ consisting of signal vectors with at least two strictly positive entries.

We are interested in studying the effect of the strategic behavior on the efficiency of resource allocation mechanisms. An efficiency benchmark that has been used extensively in the related literature is {\em social welfare}. For an allocation $\dd \in \calQ\cup \zero$ of a resource allocation game $\calG^M$, the social welfare is defined as
$$\SW(\dd,\calG^M)= \sum_{i=1}^n v_i(d_i),$$
where $n$ is the number of players in $\calG^M$ and $v_i$ is the valuation function of player $i$. Then, the inefficiency of equilibria of game $\calG^M$ can be measured by its {\em price of anarchy} which is defined as 
$$\poa(\calG^M) = \sup_{\sss \in \eq(\calG^M)} \frac{\SW^*(\calG^M)}{\SW(g^M(\sss),\calG^M)},$$
where $\SW^*(\calG^M)$ denotes the maximum social welfare over all allocations of $\calG^M$.

However, the definition of the social welfare does not take into account the possibly finite budgets that the players may have. Therefore, we instead use the {\em liquid welfare} as our efficiency benchmark. The liquid welfare of an allocation $\dd$ is defined as 
$$\LW(\dd,\calG^M)= \sum_{i=1}^n \min\{v_i(d_i),c_i\},$$
where $c_i$ is the budget of player $i$. Clearly, when players have no budget constraints, the liquid welfare coincides with the social welfare. The {\em liquid price of anarchy} of a resource allocation game $\calG^M$ is then defined as 
$$\lpoa(\calG^M) = \sup_{\sss \in \eq(\calG^M)} \frac{\LW^*(\calG^M)}{\LW(g^M(\sss),\calG^M)},$$
where $\LW^*(\calG^M)$ denotes the maximum liquid welfare over all allocations of game $\calG^M$. We use the overloaded term $\lpoa(M)$ to denote the liquid price of anarchy of the resource allocation mechanism $M$. This is defined as the maximum (or, more formally, the supremum) liquid price of anarchy over all games that are induced by mechanism $M$. 

\subsection{Examples of resource allocation mechanisms}
Let us devote some space to the definition of some well-known mechanisms from the literature. An important class of resource allocation mechanisms is that of {\em pay-your-signal} mechanisms ($\PYS$, for short). When at least two players submit non-zero signals, a $\PYS$ mechanism charges each player $i$ a payment equal to the signal $s_i$ that she submits. Otherwise, $\PYS$ mechanisms follow the general convention that we have defined at the beginning of Section~\ref{sec:prelim}, and do not charge any payment to any player.

The most popular $\PYS$ mechanism is the $\prop$ mechanism that was introduced in \citep{K97}. This mechanism allocates the resource proportionally to the players' signals (this is why it is also known as the proportional allocation mechanism in the related literature), i.e.,
$$g_i^{\prop}(\sss) = \frac{s_i}{\sum_{j=1}^n{s_j}}.$$ 

The $\prop$ mechanism has played a central role in the related literature; for the no-budget setting, \citet{JT04} proved that its price of anarchy is $4/3$. In their attempt to design the $\PYS$ mechanism with the lowest possible price of anarchy, \citet{SH04} defined the allocation function 
$$g_i^{\SH}(\sss) = \frac{s_i}{\max_\ell\{s_\ell\}} \int_0^1 \prod_{j \neq i} \left( 1- \frac{s_j}{\max_\ell\{s_\ell\}}t \right) \ud t.$$
We will refer to the $\PYS$ mechanism that uses this allocation function as $\SH$. For two players, the allocation function has a very simple definition as $g_1^{\SH}(\sss)=\frac{s_1}{2s_2}$ when $s_1\leq s_2$, and $g_1^{\SH}(\sss)=1-\frac{s_2}{2s_1}$ otherwise. \citet{SH04} proved that the two-player version of the $\SH$ mechanism has an optimal (among all $\PYS$ mechanisms) price of anarchy of $8/7$ and provided experimental evidence that the price of anarchy of the $n$-player version is only marginally higher. As we will see later in Section~\ref{sec:pay-your-signal}, the comparison between $\prop$ and $\SH$ yields a drastically different result when players have budgets and the liquid welfare is used as the efficiency benchmark.

Other interesting classes of mechanisms use proportional allocation, but different kinds of payments. Among them, a mechanism defined by \citet{MB06} uses the class of payment functions  
\begin{align*}
p_i^M(\sss) = \left(\sum_{j \neq i}{s_j}\right)\cdot \int_0^{s_i} \frac{h^M(t+\sum_{j \neq i}s_j)}{(t+\sum_{j \neq i}s_j)^2} \ud t,
\end{align*}
where $h^M:\R_{\geq 0} \rightarrow \R_{\geq 0}$ is an increasing function (such as $h^M(z)=z$; several other choices for $h^M$ have been suggested in \citep{MB06}). These mechanisms have the remarkable property of full efficiency at equilibria in the no-budget setting (i.e., they have price of anarchy equal to $1$). Independently from \citet{MB06}, \citet{JT09} as well as \citet{YH07} presented resource allocation mechanisms that achieve full efficiency in the no-budget setting. All these mechanisms can be thought of as adaptations of the well-known VCG paradigm.

\section{A lower bound for all mechanisms}\label{sec:lower-bound}
The fact that the mechanisms of \citep{MB06,JT09,YH07} achieve full efficiency seems quite surprising, since resource allocation mechanisms do not have direct access to the valuation functions of the players. The definition of these mechanisms is such that the incentives of the players are fully aligned to the global goal of maximizing the social welfare. In a sense, these mechanisms manage to achieve access to the valuation functions indirectly.
In contrast, when players have budget constraints, we show below that a liquid price of anarchy equal to $1$ is not possible. This means that resource allocation mechanisms fail to ``mine'' any kind of information about the budget values of the players, while budgets affect the strategic behavior of the players crucially.\footnote{One might wonder whether this inability of resource allocation mechanisms is due to the restriction that each player submits a single scalar signal to the mechanism. This is not true, as we discuss in Section~\ref{sec:extensions}. In particular, we show that the lower bound of $2-1/n$ holds even if the players submit multidimensional signals, which could be used to encode detailed information about their valuation functions and budgets. Moreover, in Section~\ref{sec:extensions}, we show a slightly weaker lower bound even for mechanisms that know the budgets of the players a priori.}

\begin{theorem}\label{thm:lower-bound}
Every $n$-player resource allocation mechanism has liquid price of anarchy at least $2-1/n$.
\end{theorem}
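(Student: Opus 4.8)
The plan is to construct, for any given $n$-player resource allocation mechanism $M$, a specific game instance whose unique (or worst-case) equilibrium has liquid welfare exactly $\frac{1}{2-1/n}$ times the optimal liquid welfare. The natural construction exploits anonymity and the conventions on allocation/payment functions. I would use $n$ players with identical, simple valuations (likely affine or even linear, e.g.\ $v_i(x)=x$ or $v_i(x)=\alpha x$) and identical finite budgets $c_i=c$, so that anonymity forces the mechanism to treat them symmetrically. The key idea is that the optimal liquid welfare can spread the resource across players and capture roughly the full unit value (capped appropriately by budgets), while the symmetric equilibrium forced by the mechanism concentrates value inefficiently, losing nearly a factor of $2$ as budgets bind.

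\medskip

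First I would pin down the symmetric equilibrium. By anonymity, there is a natural candidate equilibrium $\sss=(s,\dots,s)$ in which every player submits the same signal $s$, receives fraction $g_i^M(s,\dots,s)=1/n$ (since the allocation lands in the simplex $\calQ$ and is symmetric), and pays $p_i^M(s,\dots,s)$. I would choose the common budget $c$ and the valuation scaling so that this symmetric profile is an equilibrium with payments equal to the budget, i.e.\ $p_i^M(\sss)=c$, which by the equilibrium characterization in the preliminaries only requires the utility derivative to be \emph{non-negative} at $s$ (the budget-binding case). This is the crucial degree of freedom: because the payment is capped at the budget, I can make the equilibrium signal as small as I like relative to what an unconstrained best response would demand, so no player wants to deviate upward (blocked by budget) or downward (loses allocation without gaining, given the utility derivative sign). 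At this equilibrium the liquid welfare is $\sum_i \min\{v_i(1/n),c\}$.

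\medskip

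Next I would compute the optimal liquid welfare for this instance and take the ratio. The optimum can allocate the whole resource to distribute value so that as many caps as possible are saturated: for instance, giving most of the resource to one player (capturing value up to that player's budget or valuation) while the remaining players capture what the symmetric split already gives, or more cleverly, an allocation that saturates the budget cap on several players. By tuning $v$ and $c$, I would arrange that the optimal liquid welfare approaches (roughly) twice the equilibrium welfare minus the single-player correction term, yielding the ratio $2-1/n$. The factor $2-1/n$ strongly suggests that at the optimum one distinguished player contributes her full uncapped value while the other $n-1$ contribute capped value, whereas at equilibrium all $n$ are capped — so the ``missing factor'' is exactly the $1/n$ fraction of the resource whose value is wasted by capping at equilibrium. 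I would choose parameters so $v(1/n)=c$ (cap exactly binds the equilibrium allocation) and show the optimum achieves about $nc \cdot \frac{2-1/n}{1}$ type scaling.

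\medskip

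\textbf{The hard part} will be verifying that the proposed symmetric profile is genuinely an equilibrium for an \emph{arbitrary} mechanism $M$, using only the stated structural properties (anonymity; monotonicity and differentiability of $g_i^M(y,\sss_{-i})$ and $p_i^M(y,\sss_{-i})$ in $y$; the zero-signal and single-nonzero-signal conventions). The subtlety is that I do not control the mechanism's functional form, so I must argue deviation-proofness abstractly: an upward deviation is forbidden because it would push the payment above the budget $c$ (using monotonicity of $p_i^M$ in $y$), while a downward deviation cannot help because the utility derivative is non-negative in the budget-binding regime and, more globally, reducing one's signal reduces one's allocation (monotonicity of $g_i^M$) without escaping any payment advantage. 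Making the downward-deviation argument fully rigorous — i.e.\ showing utility is non-decreasing up to the budget-binding signal and that the binding signal yields the global maximum of the constrained utility — is where I expect the real work to lie, and it is exactly the place where choosing linear valuations and exploiting the $\min\{v,c\}$ structure keeps the argument clean.
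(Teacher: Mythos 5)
There is a genuine gap, and in fact two. First, your instance cannot produce the required welfare gap: with $n$ identical players having linear valuations $v_i(x)=\alpha x$ and identical finite budgets $c$, if the symmetric equilibrium has every payment (and hence every player's liquid contribution) capped at $c$, then the equilibrium liquid welfare is $nc$, while the optimal liquid welfare is $\max_{\xx}\sum_i \min\{\alpha x_i,c\}\leq nc$, attained by the equal split whenever $\alpha/n\geq c$. So the ratio of this instance is $1$, not $2-1/n$. Your own closing remark reveals the inconsistency: to get the factor $2-1/n$ you need one distinguished player whose value at the optimum is \emph{uncapped}, i.e., a player with a large or infinite budget --- which breaks the symmetry your entire construction (and the anonymity-based equilibrium argument) rests on. Second, the equilibrium-verification plan is unsound for an arbitrary mechanism: a non-negative utility derivative at a budget-binding signal is a \emph{necessary} condition at equilibrium, not a sufficient one, and for arbitrary increasing differentiable $g_i^M$ and $p_i^M$ the utility $u_i^M(y,\sss_{-i})$ need not be single-peaked, so monotonicity alone does not rule out profitable downward deviations. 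One can kill downward deviations by scaling $\alpha$ up, but that only re-enters the degenerate ratio-$1$ situation above; and for the one uncapped player you would need a genuine global best-response condition that cannot be imposed on an unknown mechanism by tuning parameters.

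The paper's proof avoids positing and verifying any equilibrium. It takes an \emph{arbitrary} equilibrium $\sss$ of the no-budget game in which all players have $v_i(x)=x$ and $c_i=+\infty$, sets $\dd=g^M(\sss)$ and $i^*=\arg\min_i d_i$ (so $d_{i^*}\leq 1/n$), and then modifies only the other players: player $i\neq i^*$ gets valuation $\tilde{v}_i(x)=d_i+x$ and budget $\tilde{c}_i=d_i$. The constant shift leaves each player's utility function unchanged up to an additive constant, so incentives are identical; the budget $\tilde{c}_i=d_i$ is affordable at $\sss$ because equilibrium utility in the first game is non-negative (so $p_i^M(\sss)\leq d_i$), and imposing budgets only deletes strategies, which cannot destroy an equilibrium. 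Hence $\sss$ remains an equilibrium with liquid welfare $\sum_i d_i=1$, while giving the whole resource to $i^*$ yields liquid welfare at least $1+\sum_{i\neq i^*}d_i=2-d_{i^*}\geq 2-1/n$. This transfer-of-equilibrium trick is exactly the idea your proposal is missing, and it is what makes the argument go through for every mechanism without any first-order or concavity analysis.
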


\begin{proof}
Let $M$ be any $n$-player resource allocation mechanism that uses an allocation function $g^M$ and a payment function $p^M$. Let $\sss=(s_1, ..., s_n)$ be an equilibrium of the game $\calG^M_1$ induced by $M$ for players with valuations $v_i(x)=x$ and budgets $c_i=+\infty$, for every $i \in [n]$. Assume that the allocation returned by $M$ at this equilibrium is $\dd = (d_1, ..., d_n)$. Since all players have the same linear valuation function and none of them has a budget constraint, any allocation (equilibrium or not) achieves the same liquid (or social) welfare, and hence $\lpoa(\calG^M_1) = 1$.

Recall that, for every signal vector $\yy = (y_1, ..., y_n)$, the utility of player $i$ is defined as $u^M_i(\yy)=v_i(g^M_i(\yy))-p^M_i(\yy)$. Now, let $i^* = \arg\min_i d_i$ (hence, $d_{i^*} \leq 1/n$) and consider the game $\calG^M_2$ where each player $i\neq i^*$ has the modified valuation function $\tilde{v}_i(x)=d_i+x$ and budget $\tilde{c}_i=d_i$, while player $i^*$ is as in $\calG^M_1$ (see Figure \ref{fig:lower-bound}). Observe that the modified utility of player $i\neq i^*$ as a function of a signal vector $\yy$ is now $\tilde{u}^M_i(\yy)=\tilde{v}_i(g^M_i(\yy))-p^M_i(\yy)=u^M_i(\yy)+d_i$. Also, since the utility of player $i \neq i^*$ is non-negative at the equilibrium $\sss$ of game $\calG_1^M$, we have that $p^M_i(\sss) \leq d_i = \tilde{c}_i$, meaning that player $i$ can also afford this payment in game $\calG^M_2$. Hence, $\sss$ is an equilibrium in $\calG^M_2$ as well (and, again, $M$ returns the same allocation $\dd$).\footnote{We remark that the conclusion $d_{i^*}\leq 1/n$ can be drawn from the assumption $\sum_{i=1}^n{d_i}\leq 1$. Hence, the proof of Theorem~\ref{thm:lower-bound} holds even for mechanisms that do not allocate the whole resource to the players. We will not consider this variation further, since no improvement of our positive results seems to be possible in this way.} 

Its liquid welfare is $\sum_i \min\{\tilde{v}_i(d_i),\tilde{c}_i\} = \min\{d_{i^*},+\infty\} + \sum_{i\neq i^*} \min\{2d_i,d_i\} = \sum_i d_i = 1$, while the optimal liquid welfare is at least $1+\sum_{i \neq i^*}d_i$, achieved at the allocation according to which the whole resource is given to player $i^*$. Hence, we conclude that the liquid price of anarchy of $M$ is $\lpoa(M) \geq \lpoa(\calG^M_2) \geq 1+\sum_{i \neq i^*}d_i = 2-d_{i^*} \geq 2-1/n$, as desired. 
\end{proof}

\begin{figure}[t!]
\centering
\includegraphics[scale=0.8]{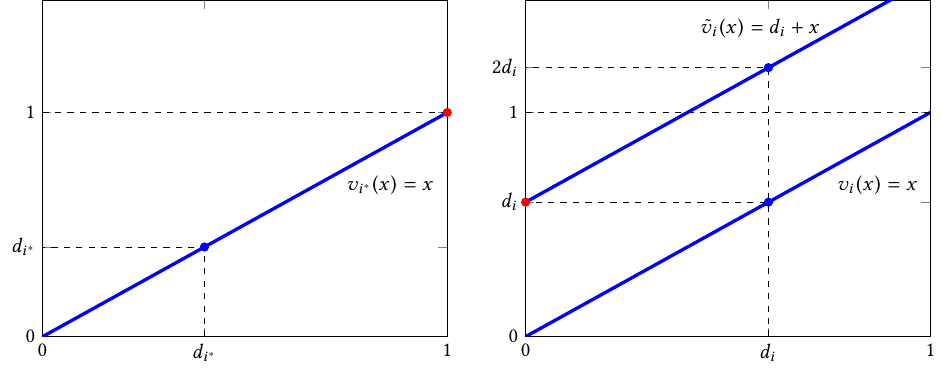}
\caption{A graphical representation of the games used in the proof of Theorem~\ref{thm:lower-bound}. The two figures depict the valuation functions of players $i^*$ and $i \neq i^*$ in games $\calG^M_1$ and $\calG^M_2$. The blue points (i.e., point $(d_{i^*},d_{i^*})$ in the left figure, and points $(d_i,d_i)$ and $(d_i,2d_i)$ in the right figure) represent the equilibrium in both games, and the optimal allocation in game $\calG^M_1$. The optimal allocation in $\calG^M_2$ is represented by the red points (i.e., point $(1,1)$ in the left figure and point $(0,d_i)$ in the right one).}
\label{fig:lower-bound}
\end{figure}

\section{The structure of worst-case games and equilibria}\label{sec:structure}
In this section, we prove our structural characterization. Given an $n$-player resource allocation mechanism $M$ (with allocation and payment functions $g^M$ and $p^M$, respectively), signal vector $\sss \in \X_n$, and an integer $j\in[n]$, define the $n$-player game $\calG^M(\sss,j)$ as follows. Every player has the affine valuation function $\tilde{v}_i(z)=\lambda^M_i(\sss)\cdot z+\kappa^M_i(\sss)$ and budget $\tilde{c}_i$, where\footnote{Note that the quantity $\left.\frac{\partial g^M_i(y,\sss_{-i})}{\partial y}\right|_{y=s_i}$ is strictly positive for every $\sss\in \X_n$.}
\begin{align*}
\lambda^M_i(\sss) &= \left(\left.\frac{\partial g^M_i(y,\sss_{-i})}{\partial y}\right|_{y=s_i}\right)^{-1}\cdot \left.\frac{\partial p^M_i(y,\sss_{-i})}{\partial y}\right|_{y=s_i}
\end{align*}
and $\kappa^M_{j}(\sss)=0$, $\tilde{c}_{j}=+\infty$, and $\kappa^M_i(\sss)=\tilde{c}_i=p^M_i(\sss)$ for every player $i\not=j$. 

Let us give an example. Let $\sss \in \X_n$ be any signal vector and consider the Kelly mechanism, which allocates to player $i$ a fraction
$g_i^\prop(\sss) = \frac{s_i}{s_i + \sum_{t \neq i} s_t}$ and requires a payment $p_i^\prop(\sss)=s_i$. Since 
\begin{align*}
\left.\frac{\partial g^\prop_i(y,\sss_{-i})}{\partial y}\right|_{y=s_i} = \frac{ \sum_{t \neq i} s_t }{ \left(s_i + \sum_{t \neq i} s_t \right)^2}
\end{align*}
and
\begin{align*}
\left.\frac{\partial p^\prop_i(y,\sss_{-i})}{\partial y}\right|_{y=s_i} = 1
\end{align*}
we have that
\begin{align*}
\lambda^\prop_i(\sss) 
= \left(\left.\frac{\partial g^M_i(y,\sss_{-i})}{\partial y}\right|_{y=s_i}\right)^{-1}\cdot \left.\frac{\partial p^M_i(y,\sss_{-i})}{\partial y}\right|_{y=s_i}
= \frac{ \left(s_i + \sum_{t \neq i} s_t \right)^2}{ \sum_{t \neq i} s_t }.
\end{align*}
Therefore, in game $\calG^\prop(\sss,j)$, player $j$ has budget $\tilde{c}_j = +\infty$ and valuation function 
$$\tilde{v}_j(z)= \frac{ \left(s_j + \sum_{t \neq j} s_t \right)^2}{ \sum_{t \neq j} s_t }\cdot z.$$
Any other player $i \neq j$ has budget $\tilde{c}_i = p_i^\prop(\sss) = s_i$ and valuation function
$$\tilde{v}_i(z)= \frac{ \left(s_i + \sum_{t \neq i} s_t \right)^2}{ \sum_{t \neq i} s_t }\cdot z + s_i.$$

In Lemma~\ref{lem:affinization}, we show that the games defined in this way are in a sense extreme in terms of the liquid price of anarchy of a mechanism $M$. Particularly for the above example, this is the one with worst liquid price of anarchy among all games induced by the $\prop$ mechanism which have the signal vector $\sss\in \X_n$ as the worst equilibrium. Then (in Lemma~\ref{lem:master}), an upper bound on the liquid price of anarchy of a mechanism is obtained by taking the supremum of the bounds implied by Lemma~\ref{lem:affinization} over all signal vectors of $\X_n$. Under mild assumptions, this bound is tight; Lemma~\ref{lem:equilibrium-CC-mechanisms} provides sufficient conditions for this.

\begin{lemma}\label{lem:affinization}
Let $\calG^M_1$ be an $n$-player resource allocation game that is induced by a mechanism $M$ with $\lpoa(\calG^M_1)>1$. Let $\sss\in \X_n$ be an equilibrium of $\calG^M_1$ of minimum liquid welfare. Then, there exists an integer $i^*\in [n]$ such that
\begin{align*}
\lpoa(\calG^M_1) \leq \frac{\LW(\tilde{\xx},\calG^M(\sss,i^*))}{\LW(g^M(\sss),\calG^M(\sss,i^*))} 
= \frac{\sum_{i\not=i^*}{p_i^M(\sss)}+\lambda^M_{i^*}(\sss)}{\sum_{i\not=i^*}{p_i^M(\sss)}+\lambda^M_{i^*}(\sss)\cdot g^M_{i^*}(\sss)},
\end{align*}
where $\tilde{\xx}=(\tilde{x}_1, ..., \tilde{x}_n)$ denotes the allocation with $\tilde{x}_{i^*}=1$ and $\tilde{x}_i = 0$ for $i\neq i^*$.
\end{lemma}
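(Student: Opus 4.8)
The plan is to dispatch the equality on the right first, since it is purely computational and already reveals the design of the construction. In $\calG^M(\sss,i^*)$ every player $i\neq i^*$ has valuation $\tilde v_i(z)=\lambda^M_i(\sss)\,z+p^M_i(\sss)$ and budget $\tilde c_i=p^M_i(\sss)$; since $\lambda^M_i(\sss)\geq 0$ and $z\geq 0$ we get $\tilde v_i(z)\geq p^M_i(\sss)=\tilde c_i$, so the capped value $\min\{\tilde v_i(z),\tilde c_i\}$ equals $p^M_i(\sss)$ for \emph{every} allocation. Player $i^*$ has the linear valuation $\lambda^M_{i^*}(\sss)\,z$ and infinite budget, contributing $\lambda^M_{i^*}(\sss)\,d_{i^*}$. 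Hence $\LW(\dd,\calG^M(\sss,i^*))=\lambda^M_{i^*}(\sss)\,d_{i^*}+\sum_{i\neq i^*}p^M_i(\sss)$ is affine and increasing in $d_{i^*}$, which at once identifies $\tilde{\xx}$ (all mass on $i^*$) as the optimal allocation and produces both the numerator and the denominator of the claimed fraction by plugging in $d_{i^*}=1$ and $d_{i^*}=g^M_{i^*}(\sss)$. (The particular value $\lambda^M_i(\sss)$ is what makes the affine-valuation utility derivative vanish at $s_i$, matching the equilibrium first-order condition; I do not, however, need $\sss$ to be an equilibrium of $\calG^M(\sss,i^*)$, and the statement wisely avoids asserting this.)

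For the inequality, since $\sss$ is an equilibrium of minimum liquid welfare we have $\lpoa(\calG^M_1)=\LW^*(\calG^M_1)/\LW(g^M(\sss),\calG^M_1)$, so writing $A=\LW^*(\calG^M_1)$, $B=\LW(g^M(\sss),\calG^M_1)$, $N=\lambda^M_{i^*}(\sss)+\sum_{i\neq i^*}p^M_i(\sss)$ and $D=\lambda^M_{i^*}(\sss)g^M_{i^*}(\sss)+\sum_{i\neq i^*}p^M_i(\sss)$, it suffices to exhibit $i^*$ with $A/B\leq N/D$. I would obtain this from two ingredients: a lower bound $B\geq D$ (affinization does not raise the equilibrium welfare) and an upper bound on $A$ matching $N$ up to the available slack.

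The bound $B\geq D$ I would prove term by term. For $i\neq i^*$ the equilibrium conditions give individual rationality $v_i(g^M_i(\sss))\geq p^M_i(\sss)$ and budget feasibility $p^M_i(\sss)\leq c_i$, so $\min\{v_i(g^M_i(\sss)),c_i\}\geq p^M_i(\sss)$, matching the new term exactly. For $i^*$ I need $\min\{v_{i^*}(g^M_{i^*}(\sss)),c_{i^*}\}\geq \lambda^M_{i^*}(\sss)\,g^M_{i^*}(\sss)$; the value side follows from concavity and $v_{i^*}(0)\geq 0$ once $v_{i^*}'(g^M_{i^*}(\sss))\geq\lambda^M_{i^*}(\sss)$, which is precisely the equilibrium first-order condition in the interior and budget-saturated regimes, and is vacuous when $g^M_{i^*}(\sss)=0$. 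For the upper bound on $A$, I would replace each capped valuation $w_i(\cdot)=\min\{v_i(\cdot),c_i\}$ by its supporting line at $g^M_i(\sss)$: concavity gives $w_i(d_i)\leq w_i(g^M_i(\sss))+\mu_i\,(d_i-g^M_i(\sss))$ for a nonnegative supergradient $\mu_i$, and maximizing over the simplex yields $A\leq B+\max_i\mu_i-\sum_i\mu_i\,g^M_i(\sss)$, with the maximizer pointing to the right $i^*$.

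The hard part will be selecting a single $i^*$ that serves both ingredients at once and then reconciling the supergradient estimate with the target ratio $N/D$. Three regimes of equilibrium play must be tracked in parallel --- interior ($0<p^M_i(\sss)<c_i$, where $v_i'=\lambda^M_i(\sss)$), budget-saturated ($p^M_i(\sss)=c_i$, where $v_i'\geq\lambda^M_i(\sss)$), and zero-signal ($s_i=0$, where $g^M_i(\sss)=0$) --- and in each the relations among the valuation slope $v_i'(g^M_i(\sss))$, the marginal rate $\lambda^M_i(\sss)$, the supergradient $\mu_i$, the payment $p^M_i(\sss)$, and the cap $c_i$ flip. The truly delicate point is controlling the cap for the distinguished player: for a convex payment one has $\lambda^M_{i^*}(\sss)g^M_{i^*}(\sss)\geq p^M_{i^*}(\sss)$, so $c_{i^*}>p^M_{i^*}(\sss)$ does \emph{not} by itself guarantee $c_{i^*}\geq\lambda^M_{i^*}(\sss)g^M_{i^*}(\sss)$, which is why $i^*$ must be chosen as an under-served player whose value is below its cap at equilibrium. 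I expect this simultaneous-existence-and-case-analysis step, rather than any single inequality, to be the crux of the proof.
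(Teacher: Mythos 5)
Your route is the paper's route. The equality computation is identical: the paper likewise observes that every player $i\neq i^*$ in $\calG^M(\sss,i^*)$ has her value capped at her budget $\tilde{c}_i=p^M_i(\sss)$ in every allocation, so the liquid welfare there is affine increasing in the share of $i^*$. For the inequality, the paper uses exactly your two ingredients --- $\LW(\dd,\calG^M_1)\geq\LW(\dd,\calG^M(\sss,i^*))$, and a supporting-line/concavity bound comparing the optimum of $\calG^M_1$ to the equilibrium --- and its distinguished player is precisely the ``under-served'' player you prescribe: $i^*={\arg\max}_j\,\lambda^M_j(\sss)$ over the players whose value at equilibrium is strictly below their budget (the interior-FOC set and the zero-signal set in the paper's partition $A$, $B$, $\Gamma$). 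Your per-term bounds (individual rationality plus budget feasibility giving $\min\{v_i(d_i),c_i\}\geq p^M_i(\sss)$ for $i\neq i^*$; concavity with $v_{i^*}(0)\geq 0$ for $i^*$) all appear in the paper's proof, and your supergradient estimate is a clean repackaging of its per-player $\delta(i)$ accounting. Your worry about a budget-saturated $i^*$ dissolves for the reason you half-identify: $p_i^M(\sss)=c_i$ together with nonnegative equilibrium utility forces $v_i(d_i)\geq c_i$, so such players are capped and never candidates for $i^*$.

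However, the proof is incomplete at exactly the step you flag, and two pieces are genuinely missing. First, you never invoke the hypothesis $\lpoa(\calG^M_1)>1$, yet it is what guarantees your under-served player \emph{exists}: if every player had $v_i(d_i)\geq c_i$ at equilibrium, then $\LW(\dd,\calG^M_1)=\sum_i c_i\geq\LW^*(\calG^M_1)$ and the ratio would be $1$; without this one-line observation your selection has no candidates. Second, the combination of your ingredients must be carried out in difference form, and your phrase ``an upper bound on $A$ matching $N$'' cannot be realized literally: $A\leq N$ is false in general, since a capped player may contribute up to $c_i$ to the optimum but only $p^M_i(\sss)<c_i$ to $N$. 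What your supergradient estimate actually delivers --- after choosing $i^*$ as the maximizer over the under-served set and verifying $\mu_{i^*}\leq\lambda^M_{i^*}(\sss)$ in each regime (equality in the interior case; $\mu_{i^*}=v_{i^*}'(0)\leq\lambda^M_{i^*}(\sss)$ with $d_{i^*}=0$ in the zero-signal case, from the nonpositive utility derivative; $\mu_i=0$ for capped players) --- is the difference bound $A-B\leq\lambda^M_{i^*}(\sss)(1-d_{i^*})=N-D$. Combined with $B\geq D$ and the trivial $A\geq B$, this yields $A/B\leq N/D$ via $(A-B)(B-D)\geq 0$, which is exactly the chain of inequalities in the paper. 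Once these two steps are inserted, your argument coincides with the published proof.
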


\begin{proof}
Consider an $n$-player resource allocation game $\calG_1^M$ that is induced by mechanism $M$. Let $v_i$ and $c_i$ be the valuation function and budget of player $i$, respectively. Let $\sss \in \X_n$ be the equilibrium of game $\calG_1^M$ of minimum liquid welfare. We denote by $\xx$ the optimal allocation in $\calG_1^M$. Without loss of generality, we assume that, for every player $i$, $x_i = 0$ if $v_i(0) > c_i$ and $v_i(x_i)\leq c_i$ otherwise, and we relax the allocation definition to $\sum_{i=1}^n{x_i}\leq 1$; this does not constrain the optimal liquid welfare which is $\LW(\xx,\calG_1^M) = \sum_i \min\{v_i(x_i),c_i\}$. We use $d_i=g_i^M(\sss)$ for the resource fraction allocated to player $i$ in $\sss$; let $\dd =(d_1, ..., d_n)$. 

We partition the players into the following three sets:
\begin{itemize}
\item Set $A$ consists of players $i$ with $v_i(d_i)<c_i$ and signal $s_i$ such that the derivative of their utility is equal to $0$.
\item Set $B$ consists of players $i$ with signal $s_i=0$ (hence, $d_i=0$) and negative utility derivative such that $v_i(0) < c_i$.
\item Set $\Gamma$ consists of players $i$ with signal $s_i$ such that $v_i(d_i) \geq c_i$.
\end{itemize}

First, observe that sets $A$ and $B$ cannot be both empty, since it would then be $\LW(\dd,\calG_1^M)=\sum_{i\in [n]}{c_i}\geq \LW(\xx,\calG_1^M)$, and the liquid price of anarchy of $\calG_1^M$ would be exactly $1$, contradicting the assumption of the lemma. So, in the following, we assume that at least one of $A$ and $B$ is non-empty.

Now consider the games $\calG^M(\sss,j)$ for $j\in [n]$ and let $i^* ={\arg\max}_{j\in A\cup B}{\{\lambda^M_j(\sss)\}}$. We will show that 
\begin{align}\label{eq:positive-decrease}
\LW(\dd,\calG^M_1)& \geq \LW(\dd,\calG^M(\sss,i^*))
\end{align}
and we will furthermore show that the allocation $\tilde{\xx}$ satisfies 
\begin{align}\label{eq:decrease}
\LW(\xx,\calG^M_1)-\LW(\tilde{\xx},\calG^M(\sss,i^*)) &\leq \LW(\dd,\calG^M_1)-\LW(\dd,\calG^M(\sss,i^*)).
\end{align}
In this way (recall that $\sss$ is the equilibrium of minimum liquid welfare in game $\calG^M_1$ and $\dd$ is the resulting allocation), we will have
\begin{align*}
\lpoa(\calG^M_1)&= \frac{\LW(\xx,\calG^M_1)}{\LW(\dd,\calG^M_1)} \\
&\leq  \frac{\LW(\xx,\calG^M_1)-(\LW(\xx,\calG^M_1)-\LW(\tilde{\xx},\calG^M(\sss,i^*)))}{\LW(\dd,\calG^M_1)-(\LW(\dd,\calG^M_1)-\LW(\dd,\calG^M(\sss,i^*)))}\\
&= \frac{\LW(\tilde{\xx},\calG^M(\sss,i^*))}{\LW(\dd,\calG^M(\sss,i^*))} \\
&= \frac{\sum_{i\not=i^*}{p_i^M(\sss)}+\lambda^M_{i^*}(\sss)}{\sum_{i\not=i^*}{p_i^M(\sss)}+\lambda^M_{i^*}(\sss)\cdot g^M_{i^*}(\sss)},
\end{align*}
as desired. The inequality follows by (\ref{eq:positive-decrease}) and (\ref{eq:decrease}). In particular, we remove the non-negative quantity $\LW(\dd,\calG^M_1) - \LW(\dd,\calG^M(\sss,i^*))$ from the denominator, and the smaller quantity $\LW(\xx,\calG^M_1)-\LW(\tilde{\xx},\calG^M(\sss,i^*))$ from the enumerator; observe that the latter might be negative, and hence (\ref{eq:decrease}) is not enough by itself to yield the desired inequality.
The last equality follows since all players in $\calG^M(\sss,i^*)$ besides $i^*$ have always their value capped by their budget, which is equal to their payment.

\begin{figure}[p]
\centering
\includegraphics[scale=0.65]{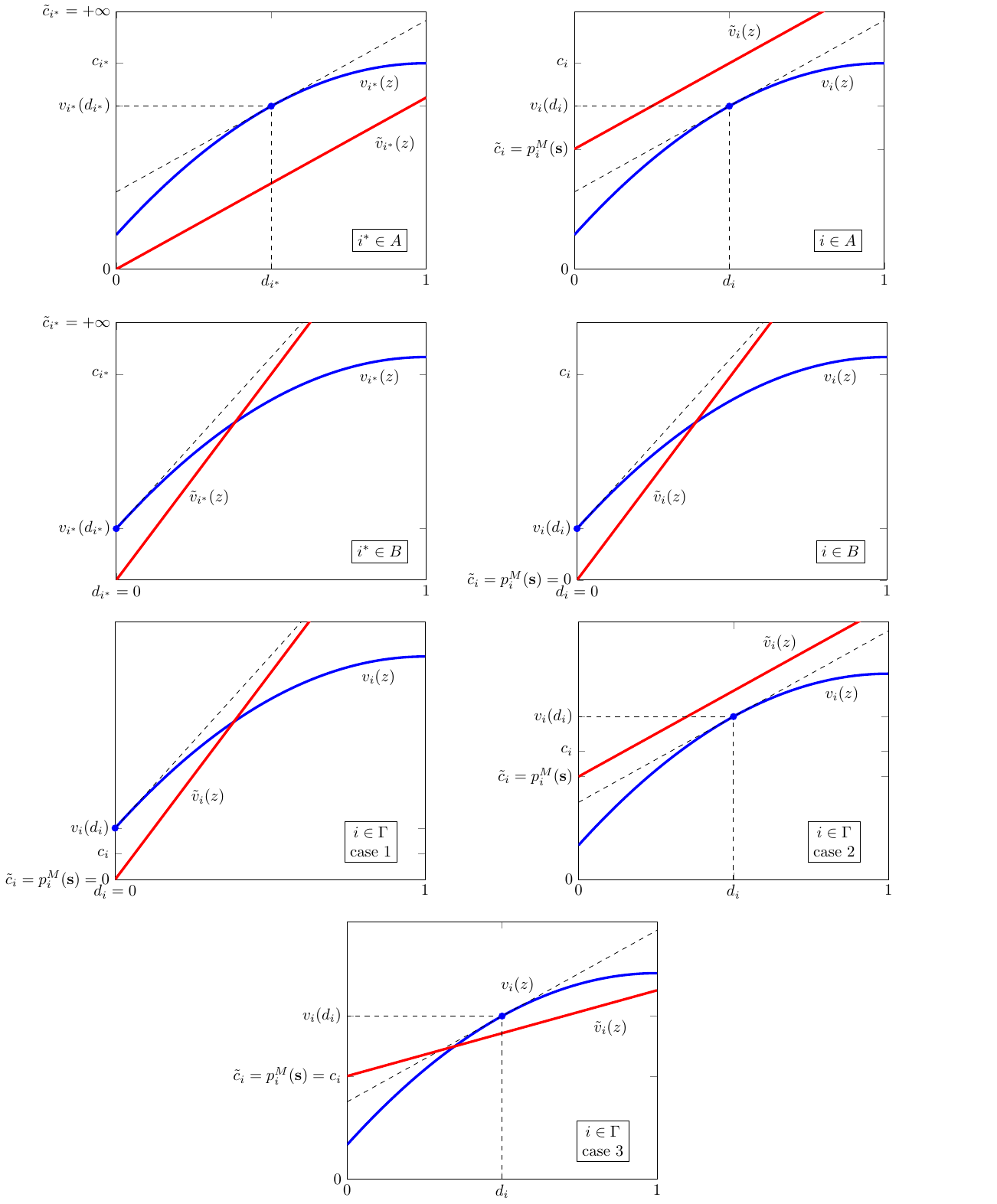}
\caption{Relation between the two games $\calG^M_1$ and $\calG^M(\sss,i^*)$ that are used in the proof of Lemma \ref{lem:affinization}; the blue lines correspond to the valuation functions of the players in $\calG^M_1$, while the red ones correspond to $\calG^M(\sss,i^*)$. The dashed line is the tangent of $v_i$ at $d_i$. The slope $\lambda_i^M(\sss)$ of the affine valuation function of player $i$ in $\calG^M(\sss,i^*)$ is greater than (third, fourth, and fifth plot), equal to (first, second, and sixth plot), or smaller than (last plot) $v_i'(d_i)$ depending on whether the utility derivative of the player is negative, zero, or positive, respectively (in particular, these are the three cases identified in the plots for $i\in \Gamma$). This follows by the definition of games $\calG^M_1$ and $\calG^M(\sss,i^*)$ and the fact that, as the utility of player $i$ in game $\calG^M_1$ has derivative 
$v'_i(d_i)\left.\frac{\partial g^M_i(y,\sss_{-i})}{\partial y}\right|_{y=s_i} - \left.\frac{\partial p^M_i(y,\sss_{-i})}{\partial y}\right|_{y=s_i}$
at equilibrium, the sign of this derivative coincides with the sign of $v'_i(d_i)-\lambda^M_i(\sss)$.} 
\label{fig:affinization}
\end{figure}

Inequality (\ref{eq:positive-decrease}) is due to the fact that the contribution of each player to the liquid welfare at $\sss$ can only decrease between the two games. Indeed, if player $i^*$ belongs to $B$, she has zero value in game $\calG^M(\sss,i^*)$. If she belongs to $A$, then her utility derivative is nullified and, hence, $v_{i^*}'(d_{i^*}) = \lambda^M_{i^*}(\sss)$. Due to the concavity of $v_{i^*}$ and since $v_{i^*}(0) \geq 0$, we get $v_{i^*}(d_{i^*}) \geq d_{i^*}v_{i^*}'(d_{i^*}) = d_{i^*}\lambda^M_{i^*}(\sss) = \tilde{v}_{i^*}(d_{i^*})$. Moreover, the contribution of player $i \neq i^*$ in $\LW(\dd,\calG^M(\sss,i^*))$ is $\tilde{c}_i = p_i^M(\sss)$ which is at most her contribution $\min\{v_i(d_i),c_i\}$ in $\LW(\dd,\calG^M_1)$ since the payment of player $i$ cannot exceed her budget in $\calG^M_1$ and her utility at equilibrium $\sss$ is non-negative. See Figure~\ref{fig:affinization} for a graphical representation of valuation functions and budgets in games $\calG^M_1$ and $\calG^M(\sss,i^*)$.

Let 
$$\delta(i) = \min\{v_i(x_i),c_i\} - \min\{\tilde{v}_i(\tilde{x}_i),\tilde{c}_i\} - \min\{v_i(d_i),c_i\} + \min\{\tilde{v}_i(d_i),\tilde{c}_i\}$$ 
denote the contribution of player $i$ to the expression 
$$\LW(\xx,\calG^M_1)-\LW(\tilde{\xx},\calG^M(\sss,i^*)) - \LW(\dd,\calG^M_1) + \LW(\dd,\calG^M(\sss,i^*)).$$ 
Then, in order to prove inequality (\ref{eq:decrease}) it suffices to prove that $\sum_i \delta(i) \leq 0$.
\begin{itemize}
\item For player $i^*$, we have that $v_{i^*}(d_{i^*}) < c_{i^*}$. Using the inequality $v_{i^*}(x_{i^*}) \leq v_{i^*}(d_{i^*}) + v_{i^*}'(d_{i^*})(x_{i^*}-d_{i^*})$ due to the concavity of the valuation function $v_{i^*}$ and the fact $\tilde{x}_{i^*}=1$, we have that
\begin{align*}
\delta(i^*) &= \min\{v_{i^*}(x_{i^*}),c_{i^*}\} - \lambda^M_{i^*}(\sss) \tilde{x}_{i^*} - v_{i^*}(d_{i^*}) + \lambda^M_{i^*}(\sss) d_{i^*} \\
&\leq v_{i^*}(x_{i^*}) - \lambda^M_{i^*}(\sss)- v_{i^*}(d_{i^*}) + \lambda^M_{i^*}(\sss) d_{i^*}\\
& \leq  v_{i^*}'(d_{i^*})(x_{i^*}-d_{i^*}) - \lambda^M_{i^*}(\sss) + \lambda^M_{i^*}(\sss) d_{i^*}.
\end{align*}
Now, we observe that (for such observations, we follow the reasoning in the caption of Figure~\ref{fig:affinization}) if player $i^*$ belongs to $A$, then $\lambda^M_{i^*}(\sss) = v_{i^*}'(d_{i^*})$, while if she belongs to $B$, then $\lambda^M_{i^*}(\sss) \geq v_{i^*}'(d_{i^*})$ and $d_{i^*} = 0$. In any case, we have that $ v_{i^*}'(d_{i^*})(x_{i^*}-d_{i^*}) \leq \lambda^M_{i^*}(\sss)(x_{i^*}-d_{i^*})$, and we obtain
\begin{align}\label{eq:delta-i-star}
\delta(i^*) &\leq \lambda^M_{i^*}(\sss)(x_{i^*}-1). 
\end{align}

\item For all players $i \neq i^*$, observe that their value is always capped by their budget in $\calG^M(\sss,i^*)$. Also, recall that $v_i(d_i) < c_i$ for every $i \in A \cup B$ and $v_i(d_i) \geq c_i$ for $i \in \Gamma$ in game $\calG_1^M$. For player $i \neq i^*$ belonging to $A$ or to $B$ we have that either $\lambda^M_i(\sss) = v_i'(d_i)$ (if $i \in A$), or $\lambda^M_i(\sss) \geq v_i'(d_i)$ and $d_i=0$ (if $i \in B$). Hence, using the concavity of $v_i$ and the fact that $\tilde{x}_i=0$, we obtain that
\begin{align}\label{eq:delta-A-Gamma-not-i}\nonumber
\delta(i) & \leq v_i(x_i) -\tilde{c}_i- v_i(d_i) + \tilde{c}_i \\\nonumber
&\leq v_i(d_i) + \lambda^M_i(\sss)(x_i - d_i) - v_i(d_i) \\
&\leq \lambda^M_{i^*}(\sss) x_i,
\end{align}
where the last inequality follows since $\lambda^M_i(\sss) \leq \lambda^M_{i^*}(\sss)$, due to the definition of player $i^*$. Otherwise, if $i \in \Gamma$, we have  
\begin{align}\label{eq:delta-Delta-Z-H}
\delta(i) = \min\{v_i(x_i),c_i\} - \tilde{c}_i - c_i + \tilde{c}_i \leq  0.
\end{align}
\end{itemize}
Hence, summing over all players, and using inequalities (\ref{eq:delta-i-star}), (\ref{eq:delta-A-Gamma-not-i}) and (\ref{eq:delta-Delta-Z-H}) as well as the fact that $\sum_i x_i \leq 1$, we obtain $\sum_i{\delta(i)} \leq 0$, and the proof is complete.
\end{proof}

We are now ready to prove the main result of this section.

\begin{lemma}\label{lem:master}
Let $M$ be an $n$-player resource allocation mechanism with allocation and payment functions $g^M$ and $p^M$, respectively. Then, its liquid price of anarchy is
\begin{align}\label{eq:master}
\lpoa(M) &\leq \sup_{\sss\in \X_n}{\left\{\frac{\sum_{i\geq 2}{p^M_i(\sss)}+\lambda^M_1(\sss)}{\sum_{i\geq 2}{p^M_i(\sss)}+\lambda^M_1(\sss)\, g^M_1(\sss)}\right\}},
\end{align}
where
\begin{align*}
\lambda^M_1(\sss) = \left(\left.\frac{\partial g_1^M(y,\sss_{-1})}{\partial y}\right|_{y=s_1}\right)^{-1} \cdot \left.\frac{\partial p_1^M(y,\sss_{-1})}{\partial y}\right|_{y=s_1}.
\end{align*}
If, in addition, for all $\sss \in \X_n$, $\sss$ is an equilibrium of game $\calG^M(\sss,1)$, then (\ref{eq:master}) holds with equality.
\end{lemma}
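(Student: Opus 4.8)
The plan is to obtain the upper bound directly from Lemma~\ref{lem:affinization} and then establish tightness, under the stated hypothesis, by using the games $\calG^M(\sss,1)$ themselves as witnesses. Since $\lpoa(M)$ is by definition the supremum of $\lpoa(\calG^M_1)$ over all games $\calG^M_1$ induced by $M$, it suffices to bound each $\lpoa(\calG^M_1)$ by the right-hand side of (\ref{eq:master}). I would first record the easy observation that the quantity maximized on the right-hand side is always at least $1$: the two partial derivatives defining $\lambda^M_1(\sss)$ are nonnegative (both $g^M_1$ and $p^M_1$ are increasing in the first signal), so $\lambda^M_1(\sss)\geq 0$, and since $g^M_1(\sss)\in[0,1]$ the denominator never exceeds the numerator. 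This settles every game with $\lpoa(\calG^M_1)=1$ at once.

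For a game with $\lpoa(\calG^M_1)>1$ I would apply Lemma~\ref{lem:affinization} to an equilibrium $\sss$ of minimum liquid welfare (note $\sss\in\X_n$, as $\zero$ is never an equilibrium), obtaining an index $i^*$ together with the bound $\lpoa(\calG^M_1)\leq \frac{\sum_{i\neq i^*}p^M_i(\sss)+\lambda^M_{i^*}(\sss)}{\sum_{i\neq i^*}p^M_i(\sss)+\lambda^M_{i^*}(\sss)\,g^M_{i^*}(\sss)}$. The remaining step is to normalize $i^*$ to player $1$ using anonymity: applying the permutation $\pi$ that swaps positions $1$ and $i^*$ produces $\sss'=\pi(\sss)\in\X_n$ for which $g^M_1(\sss')=g^M_{i^*}(\sss)$, $p^M_1(\sss')=p^M_{i^*}(\sss)$, $\sum_{i\geq 2}p^M_i(\sss')=\sum_{i\neq i^*}p^M_i(\sss)$, and, because the defining partial derivatives transform identically, $\lambda^M_1(\sss')=\lambda^M_{i^*}(\sss)$. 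Substituting these identities shows the fraction above equals the expression inside the supremum of (\ref{eq:master}) evaluated at $\sss'$, hence is at most that supremum. Taking the supremum over all induced games yields the upper bound.

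For the equality statement I would treat the games $\calG^M(\sss,1)$ as lower-bound instances. Each is a legitimate game: its affine valuations $\lambda^M_i(\sss)z+\kappa^M_i(\sss)$ are nonnegative on $[0,1]$, nondecreasing, concave and differentiable, and its budgets are nonnegative. Under the hypothesis, $\sss$ is an equilibrium of $\calG^M(\sss,1)$, so $\lpoa(M)\geq \lpoa(\calG^M(\sss,1))\geq \frac{\LW(\tilde{\xx},\calG^M(\sss,1))}{\LW(g^M(\sss),\calG^M(\sss,1))}$, the last inequality holding because $\tilde{\xx}$ (the allocation giving the whole resource to player $1$) is feasible and therefore $\LW^*\geq \LW(\tilde{\xx},\cdot)$. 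A short computation of the two liquid welfares—using that each player $i\geq 2$ has value capped at her budget $p^M_i(\sss)$, so she contributes $p^M_i(\sss)$ under both allocations, while player $1$ has infinite budget and contributes $\lambda^M_1(\sss)$ under $\tilde{\xx}$ and $\lambda^M_1(\sss)\,g^M_1(\sss)$ under $g^M(\sss)$—shows this ratio is exactly the expression inside the supremum. Taking the supremum over $\sss\in\X_n$ then matches the upper bound and forces equality.

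I expect the only delicate point to be the bookkeeping in the relabeling step: making precise that anonymity transports not merely $g^M$ and $p^M$ but also the derivative-defined quantity $\lambda^M_{i^*}$ correctly, so that the worst-case index $i^*$ supplied by Lemma~\ref{lem:affinization} can always be moved to player $1$ without changing the ratio. Everything else—the observation that the maximized quantity is at least $1$, the legitimacy of the games $\calG^M(\sss,1)$, and the two welfare computations in the tightness direction—is routine once one records that, for every player other than $1$, the budget equals the payment and hence caps her value in all relevant allocations.
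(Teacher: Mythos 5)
Your proposal is correct and takes essentially the same route as the paper's proof: the upper bound follows from Lemma~\ref{lem:affinization} applied to a minimum-welfare equilibrium together with anonymity to relabel the worst-case index $i^*$ as player~$1$, and equality follows by using the games $\calG^M(\sss,1)$ as lower-bound witnesses, comparing the equilibrium liquid welfare to that of the allocation $\tilde{\xx}$ giving the whole resource to player~$1$. Your additional remarks (disposing of games with $\lpoa(\calG^M_1)=1$ by noting the ratio in (\ref{eq:master}) is always at least $1$, and verifying that $\lambda^M_{i^*}$ transports correctly under the permutation) merely make explicit steps the paper leaves implicit.
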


\begin{proof}
Using the definition of the liquid price of anarchy, Lemma \ref{lem:affinization}, and the anonymity of resource allocation mechanisms, we have
\begin{align*}
\lpoa(M) 
&= \sup_{\calG^M}{\lpoa(\calG^M)} \\
&= \sup_{\calG^M}\sup_{\sss\in \eq(\calG^M)}{\frac{\LW^*(\calG^M)}{\LW(g^M(\sss),\calG^M)}} \\
&= \sup_{\sss\in \X_n}\sup_{\calG^M: \sss\in \eq(\calG^M)}{\frac{\LW^*(\calG^M)}{\LW(g^M(\sss),\calG^M)}} \\
&\leq\sup_{\sss\in \X_n}\max_{i^*\in [n]}{\frac{\sum_{i\not=i^*}{p^M_i(\sss)}+\lambda^M_{i^*}(\sss)}{\sum_{i\not=i^*}{p^M_i(\sss)}+\lambda^M_{i^*}(\sss)\, g^M_{i^*}(\sss)}} \\
&=\sup_{\sss\in \X_n}{\frac{\sum_{i\geq 2}{p^M_i(\sss)}+\lambda^M_1(\sss)}{\sum_{i\geq 2}{p^M_i(\sss)}+\lambda^M_1(\sss)\, g^M_1(\sss)}}.
\end{align*}
Now, if $\sss \in \eq(\calG^M(\sss,1))$ for every $\sss \in \X_n$, by just considering the games $\calG^M(\sss,1)$ induced by mechanism $M$, we have 
\begin{align*}
\lpoa(M) 
&\geq \sup_{\sss\in \X_n}{\lpoa(\calG^M(\sss,1))} \\
& \geq \sup_{\sss\in \X_n}{\frac{\sum_{i\geq 2}{p^M_i(\sss)}+\lambda^M_1(\sss)}{\sum_{i\geq 2}{p^M_i(\sss)}+\lambda^M_1(\sss)\, g^M_1(\sss)}}
\end{align*}
and (\ref{eq:master}) holds with equality. The last inequality follows by comparing the liquid welfare at $\sss$ to the liquid welfare of the allocation which gives the whole resource to player $1$. Recall that all players besides player $1$ have always their value capped by their budget in game $\calG^M(\sss,1)$. 
\end{proof}

Lemma \ref{lem:master} is extremely powerful. It essentially says that no game-theoretic reasoning is needed anymore for proving upper bounds on the $\lpoa$ and, instead, all we have to do is to solve the corresponding mathematical program. Furthermore, it can be used to prove lower bounds on the $\lpoa$ without providing any explicit construction. In this case, we just need to show that the condition $\sss \in \eq(\calG^M(\sss,1))$ holds; then the tight lower bound follows by solving the same mathematical program.

Before we continue with the rest of our results, we define the class $\calC$ of mechanisms $M$ that use concave allocation functions $g^M$ and convex payment functions $p^M$. Observe that both $\prop$ and $\SH$ (as well as the $\EPYS$ mechanism presented in Section~\ref{sec:two-players}) are members of this class. With our next lemma, we prove that the condition $\sss \in \eq(\calG^M(\sss,1))$ is satisfied for any $\calC$ mechanism $M$. This will allow us to prove lower bounds in the upcoming sections. 

\begin{lemma}\label{lem:equilibrium-CC-mechanisms}
For any $n$-player resource allocation mechanism $M \in \calC$ and $\sss \in \X_n$, $\sss \in \eq(\calG^M(\sss,1))$.
\end{lemma}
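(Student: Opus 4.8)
The plan is to show that $\sss$ is a best response for every player in the game $\calG^M(\sss,1)$ by establishing that each player's utility, viewed as a univariate function of her own signal, is concave and has a stationary point precisely at $s_i$. Recall that in $\calG^M(\sss,1)$ player $1$ has valuation $\tilde{v}_1(z)=\lambda^M_1(\sss)\,z$ and infinite budget, while each player $i\neq 1$ has valuation $\tilde{v}_i(z)=\lambda^M_i(\sss)\,z+p^M_i(\sss)$ and finite budget $\tilde{c}_i=p^M_i(\sss)$. Writing the utility of player $i$ at signal $y$ (with the remaining signals fixed at $\sss_{-i}$) in the form
\begin{align*}
\tilde{u}^M_i(y,\sss_{-i}) = \lambda^M_i(\sss)\, g^M_i(y,\sss_{-i}) - p^M_i(y,\sss_{-i}) + \text{const},
\end{align*}
I would first note that the additive constant ($0$ for player $1$ and $p^M_i(\sss)$ otherwise) is irrelevant to maximization, so the optimizer is governed entirely by $\lambda^M_i(\sss)\, g^M_i(y,\sss_{-i}) - p^M_i(y,\sss_{-i})$.

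The key step is to exploit the membership $M\in\calC$. Since $g^M_i(\cdot,\sss_{-i})$ and $p^M_i(\cdot,\sss_{-i})$ are both increasing, the two derivatives defining $\lambda^M_i(\sss)$ are positive, hence $\lambda^M_i(\sss)>0$. Consequently $\lambda^M_i(\sss)\, g^M_i(y,\sss_{-i})$ is concave in $y$ (a positive multiple of the concave allocation function) and $-p^M_i(y,\sss_{-i})$ is concave in $y$ (the negative of the convex payment function), so their sum $\tilde{u}^M_i(y,\sss_{-i})$ is concave in $y$. Differentiating and evaluating at $y=s_i$, the definition of $\lambda^M_i(\sss)$ forces
\begin{align*}
\left.\frac{\partial \tilde{u}^M_i(y,\sss_{-i})}{\partial y}\right|_{y=s_i} = \lambda^M_i(\sss)\left.\frac{\partial g^M_i(y,\sss_{-i})}{\partial y}\right|_{y=s_i} - \left.\frac{\partial p^M_i(y,\sss_{-i})}{\partial y}\right|_{y=s_i} = 0.
\end{align*}

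It then remains to convert this zero-derivative-plus-concavity into an actual maximization while respecting the budget constraints. For player $1$, whose budget is infinite, the domain is all of $\R_{\geq0}$, and concavity guarantees that a zero of the derivative is a global maximizer, so $s_1$ is a best response. For each player $i\neq 1$, the budget $\tilde{c}_i=p^M_i(\sss)$ together with the monotonicity of $p^M_i(\cdot,\sss_{-i})$ restricts her feasible signals to $[0,s_i]$, since any $y>s_i$ incurs a payment above the budget and yields utility $-\infty$. On this interval the concave utility has zero derivative at the right endpoint $s_i$; as the derivative of a concave function is non-increasing, it is non-negative throughout $[0,s_i]$, so the utility is non-decreasing there and is maximized at $y=s_i$. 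Thus $s_i$ is a best response for every player and $\sss\in\eq(\calG^M(\sss,1))$.

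The main obstacle I anticipate is not the concavity computation but the careful treatment of the budget constraint for players $i\neq 1$: one must argue that the binding budget truncates the domain to exactly $[0,s_i]$ and that placing the unconstrained stationary point at the right endpoint still yields a genuine maximizer rather than merely a critical point. Minor care is also needed at degenerate signals $s_i=0$, where the feasible interval collapses and differentiability holds only because $\sss_{-i}\neq\zero$; these cases reduce to the same monotonicity argument.
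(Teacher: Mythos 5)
Your proof is correct and follows essentially the same route as the paper's: the utility of each player in $\calG^M(\sss,1)$ is concave in her own signal (a nonnegative multiple of the concave $g^M_i$ minus the convex $p^M_i$), and by the definition of $\lambda^M_i(\sss)$ its derivative vanishes at $y=s_i$, so $s_i$ maximizes utility. The only difference is that you explicitly verify that the budget constraint of players $i\neq 1$ does not disturb this conclusion (the constraint only removes strategies while $s_i$, whose payment equals $\tilde{c}_i$ exactly, stays feasible) --- a step the paper's three-line proof leaves implicit.
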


\begin{proof}
Consider any mechanism $M \in \calC$ that uses a concave allocation function $g^M$ and a convex payment function $p^M$. By the definition of game $\calG^M(\sss,1)$, the utility of any player $i$, as a function of her signal $y$, is $u^M_i(y,\sss_{-i}) = \lambda^M_i(\sss)\cdot g_i^M(y,\sss_{-i})+\kappa^M_i(\sss) - p_i^M(y,\sss_{-i})$ and its derivative is
$$\frac{\partial u^M_i(y,\sss_{-i})}{\partial y} = \lambda^M_i(\sss) \frac{\partial g^M_i(y,\sss_{-i})}{\partial y} - \frac{\partial p^M_i(y,\sss_{-i})}{\partial y}.$$
Observe that, by the definition of $\lambda^M_i(\sss)$, the signal $s_i$ nullifies the utility derivative of player $i$. Furthermore, 
\begin{align*}
\frac{\partial^2 u^M_i(y,\sss_{-i})}{\partial y^2} &= \lambda^M_i(\sss) \frac{\partial^2 g^M_i(y,\sss_{-i})}{\partial y^2} - \frac{\partial^2 p^M_i(y,\sss_{-i})}{\partial y^2} 
\leq 0,
\end{align*}
for every $y$ such that $(y,\sss_{-i}) \in \X_n$ (i.e., $u_i^M(y,\sss_{-i})$ is a concave function). Hence, the signal $s_i$ actually maximizes the player's utility. 
\end{proof}

\section{Pay-your-signal mechanisms}\label{sec:pay-your-signal}  
In this section, we will exploit Lemma~\ref{lem:master} to prove tight bounds on the liquid price of anarchy of the $\prop$ and $\SH$ mechanisms. Our $\lpoa$ bounds are $2$ for $\prop$ (Theorem~\ref{thm:proportional-upper}) and $3$ for $\SH$ (Theorems~\ref{thm:SH-upper} and \ref{thm:SH-lower}). Recall that both of these mechanisms belong to class $\calC$ and, by Lemma~\ref{lem:equilibrium-CC-mechanisms}, the condition $\sss \in \eq(\calG^M(\sss,1))$ is satisfied.

\begin{theorem}\label{thm:proportional-upper}
The liquid price of anarchy of the $\prop$ mechanism is $2$. 
\end{theorem}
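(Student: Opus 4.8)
The plan is to invoke Lemma~\ref{lem:master} directly and simply evaluate the supremum on its right-hand side. Since the $\prop$ mechanism belongs to class $\calC$ (its allocation $g_1^{\prop}(y,\sss_{-1}) = y/(y+\sum_{j\geq 2}s_j)$ is concave in $y$, as $\partial^2 g_1^{\prop}/\partial y^2 = -2S/(y+S)^3 < 0$, and its payment $p_1^{\prop}(y,\sss_{-1})=y$ is linear hence convex), Lemma~\ref{lem:equilibrium-CC-mechanisms} guarantees that $\sss \in \eq(\calG^{\prop}(\sss,1))$ for every $\sss \in \X_n$. Therefore the bound of Lemma~\ref{lem:master} holds \emph{with equality}, so computing the supremum will pin down $\lpoa(\prop)$ exactly.

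First I would compute the three ingredients of the master formula for $\prop$. Writing $T = \sum_{j} s_j$ and $S = \sum_{j\geq 2} s_j$ (so $s_1 = T - S$), we have $\sum_{i\geq 2} p_i^{\prop}(\sss) = S$ and $g_1^{\prop}(\sss) = s_1/T$. For the slope, since $\partial g_1^{\prop}(y,\sss_{-1})/\partial y = S/(y+S)^2$ evaluates at $y=s_1$ to $S/T^2$, while $\partial p_1^{\prop}/\partial y = 1$, we get $\lambda_1^{\prop}(\sss) = T^2/S$. Substituting into the ratio in (\ref{eq:master}) and clearing the common factor $S$ from numerator and denominator yields $\frac{S^2 + T^2}{S^2 + T s_1}$, which, using $s_1 = T-S$, becomes $\frac{S^2+T^2}{S^2 - ST + T^2}$.

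Setting $r = S/T \in [0,1]$ (the degenerate case $T=0$ is excluded since $\sss\in\X_n$), this expression reduces to $f(r) = \frac{r^2+1}{r^2-r+1}$. A one-line derivative computation gives $f'(r) = (1-r^2)/(r^2-r+1)^2 \geq 0$ on $[0,1]$, so $f$ is nondecreasing and is maximized at $r=1$, where $f(1)=2$. The value $r=1$ corresponds to $s_1 = 0$, which is a legitimate signal vector in $\X_n$ whenever some other player submits a positive signal. Hence $\sup_{\sss\in\X_n} f = 2$, and by the equality form of Lemma~\ref{lem:master} we conclude $\lpoa(\prop)=2$.

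The only genuine subtlety is the boundary analysis: one must verify that $f$ attains no interior value exceeding $2$ (handled by the sign of $f'$) and that the maximizing point $r=1$, equivalently $s_1=0$, is actually realizable within $\X_n$ so that the supremum is truly $2$ rather than merely approached. Everything else is a routine substitution of the $\prop$-specific quantities into Lemma~\ref{lem:master}, which is precisely why this argument collapses to only a few lines once the structural machinery is in place.
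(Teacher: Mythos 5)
Your proposal is correct and takes essentially the same route as the paper's proof: both invoke Lemma~\ref{lem:master} with equality (justified by membership of $\prop$ in class $\calC$ together with Lemma~\ref{lem:equilibrium-CC-mechanisms}), compute $\lambda_1^{\prop}(\sss)=(s_1+C)^2/C$ with $C=\sum_{i\geq 2}s_i$, and maximize the identical rational function, which the paper simply rewrites as $2-\frac{s_1^2}{C^2+s_1C+s_1^2}$ rather than using your substitution $r=S/T$ and the sign of $f'$. The difference is purely cosmetic algebra, so nothing further is needed.
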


\begin{proof}
Consider any signal vector $\sss\in \X_n$, and let $C=\sum_{i\geq 2}{s_i}$. Since $\prop$ is a $\PYS$ mechanism, we have that $\sum_{i\geq 2}{p^{\prop}_i}(\sss)=C$ and 
$$\frac{\partial p^{\prop}_1(y,\sss_{-1})}{\partial y} =1.$$  
By the definition of the allocation function $g^{\prop}_1(y,\sss_{-1})=\frac{y}{y+C}$, we have that $$\frac{\partial g^{\prop}_1(y,\sss_{-1})}{\partial y}=\frac{C}{(y+C)^2}.$$ 
Also, since the mechanism belongs to class $\calC$, by Lemma~\ref{lem:equilibrium-CC-mechanisms}, we have that $\sss \in \eq(\calG^{\prop}(\sss,1))$. Hence, 
$$\lambda^{\prop}_1(\sss)=\frac{(s_1+C)^2}{C}$$ 
and Lemma~\ref{lem:master} yields
\begin{align*}
\lpoa(\prop) &= \sup_{s_1,C\geq 0}\frac{C+(s_1+C)^2/C}{C+(s_1+C)s_1/C} \\
&=\sup_{s_1,C\geq 0}\frac{2C^2+2s_1C+s_1^2}{C^2+s_1C+s_1^2}  \\
&= \sup_{s_1,C\geq0}{\left(2-\frac{s_1^2}{C^2+s_1C+s_1^2}\right)} \\
&=2,
\end{align*}
as desired.
\end{proof}

Notice that our proof of Theorem~\ref{thm:proportional-upper} is surprisingly short. The proof exploits Lemma~\ref{lem:master} with (\ref{eq:master}) holding with equality and, as such, it simultaneously provides a tight (upper and lower) bound. In contrast, our analysis for the SH mechanism is slightly more involved. This is mainly due to the more complicated definition of the allocation function (see Section~\ref{sec:prelim}), which requires us to distinguish between two cases, depending on whether $s_1<\max_\ell{s_\ell}$ or not. Both cases lead to inequalities that provide only an upper bound on the $\lpoa$ of SH in the proof of Theorem~\ref{thm:SH-upper}. In Theorem~\ref{thm:SH-lower}, we easily prove a matching lower bound by restricting our attention to the $2$-player version of the mechanism. Actually, the proof can be thought of as providing a tight (i.e., not only lower, but also upper) bound on the $\lpoa$ of the $2$-player version of the SH mechanism. 

\begin{theorem}\label{thm:SH-upper}
The liquid price of anarchy of the SH mechanism is at most $3$. 
\end{theorem}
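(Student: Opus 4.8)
The plan is to invoke Lemma~\ref{lem:master} directly. Since $\SH$ is a $\PYS$ mechanism and (as noted before the theorem) $\SH\in\calC$, the hypothesis of Lemma~\ref{lem:equilibrium-CC-mechanisms} holds and the upper bound of Lemma~\ref{lem:master} applies. Writing $C=\sum_{i\geq 2}s_i$, we have $\sum_{i\geq2}p_i^{\SH}(\sss)=C$ and $\frac{\partial p_1^{\SH}(y,\sss_{-1})}{\partial y}=1$, so with $a_1:=g_1^{\SH}(\sss)$ and $a_1':=\frac{\partial g_1^{\SH}(y,\sss_{-1})}{\partial y}\big|_{y=s_1}$ we get $\lambda_1^{\SH}(\sss)=1/a_1'$ and Lemma~\ref{lem:master} gives $\lpoa(\SH)\le \sup_{\sss\in\X_n}\frac{Ca_1'+1}{Ca_1'+a_1}$. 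Because $a_1\in[0,1]$ the denominator is positive, so proving the theorem reduces to the single pointwise inequality $\frac{Ca_1'+1}{Ca_1'+a_1}\le 3$, which after cross-multiplying is equivalent to $2Ca_1'+3a_1\ge 1$.

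The heart of the argument is an explicit handle on $a_1'$. Set $m=\max_\ell s_\ell$ and $\Phi(t)=\prod_{j\neq1}\!\left(1-\tfrac{s_j}{m}t\right)$, so that $a_1=\frac{s_1}{m}\int_0^1\Phi(t)\,\ud t$. I claim the identity $s_1a_1'=a_1-\Phi(1)$, where $\Phi(1)=\prod_{j\neq1}(1-\tfrac{s_j}{m})\ge 0$. If player~$1$ does not hold the maximum signal, then $m$ is independent of $s_1$, some factor of $\Phi(1)$ vanishes (so $\Phi(1)=0$), and $a_1$ is linear in $s_1$, giving $s_1a_1'=a_1$. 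If player~$1$ holds the maximum ($m=s_1$), then $g_1^{\SH}(\sss)=\int_0^1\Phi(t)\,\ud t$ and one checks $\partial_{s_1}\Phi(t)=-\frac{t}{s_1}\Phi'(t)$; integrating this over $t$ and then integrating by parts yields $s_1a_1'=-\int_0^1 t\,\Phi'(t)\,\ud t=\int_0^1\Phi(t)\,\ud t-\Phi(1)=a_1-\Phi(1)$. Substituting $C a_1'=\frac{C}{s_1}(a_1-\Phi(1))$ and introducing the scale-free parameters $\sigma=C/m=\sum_{j\neq1}\tfrac{s_j}{m}$, $\rho=s_1/m\in(0,1]$, and $I=\int_0^1\Phi(t)\,\ud t$ (so $a_1=\rho I$), the target inequality becomes
\[
(2\sigma+3\rho)\,I-\tfrac{2\sigma}{\rho}\,\Phi(1)\ \ge\ 1 .
\]

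Two elementary analytic bounds on $\Phi$ then close the argument. Since $a\mapsto\log(1-at)$ is concave on $[0,1]$ with endpoint values $0$ and $\log(1-t)$, we have $1-a_jt\ge(1-t)^{a_j}$, hence $\Phi(t)\ge(1-t)^{\sigma}$ and therefore $I\ge\frac{1}{\sigma+1}$; also $1-a_j\le e^{-a_j}$ gives $\Phi(1)\le e^{-\sigma}$, while monotonicity of $\Phi$ gives $\Phi(1)\le I$. Now split on $\rho$. If $\rho<1$, player~$1$ is not the maximizer, so $\max_{j\neq1}\tfrac{s_j}{m}=1$, forcing $\sigma\ge1$ and $\Phi(1)=0$; then $(2\sigma+3\rho)I\ge\frac{2\sigma+3\rho}{\sigma+1}\ge1$ since $\sigma+3\rho\ge1$. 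If $\rho=1$, the left-hand side is $(2\sigma+3)I-2\sigma\Phi(1)\ge(2\sigma+3)\frac{1}{\sigma+1}-2\sigma e^{-\sigma}=2+\frac{1}{\sigma+1}-2\sigma e^{-\sigma}$, and since $2\sigma e^{-\sigma}\le 2/e<1$ for all $\sigma\ge0$, this exceeds $1$.

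The main obstacle is the derivative identity $s_1a_1'=a_1-\Phi(1)$: handling the dependence of $m=\max_\ell s_\ell$ on $s_1$ (the non-maximizer versus maximizer cases) and the integration-by-parts step is where the ``slightly more involved'' character of the proof lies. Once this identity and the lower bound $I\ge\frac{1}{\sigma+1}$ are in hand, the case analysis is routine; note that the bound is tight precisely in the regime $\rho\to0$, $\sigma=1$ (a single rival with the maximum signal and a vanishing player~$1$), which is exactly the two-player configuration witnessing the matching lower bound of Theorem~\ref{thm:SH-lower}.
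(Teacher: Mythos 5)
Your proof is correct, and while it shares the paper's skeleton --- invoke Lemma~\ref{lem:master}, split on whether player $1$ holds the maximum signal, and use the Bernoulli estimate $1-\gamma t\ge (1-t)^\gamma$ to get $I=\int_0^1\Phi(t)\,\ud t\ge \frac{1}{\sigma+1}$ --- the key technical step is genuinely different. The paper bounds $\lambda_1^{\SH}(\sss)$ by estimating the allocation derivative term by term: in the hard case $s_1=\max_\ell s_\ell$ it lower-bounds $\left.\frac{\partial g^{\SH}_1(y,\sss_{-1})}{\partial y}\right|_{y=s_1}$ via Beta-type integrals by $\sum_{i\ge 2}\frac{s_i}{(C-s_i+s_1)(C-s_i+2s_1)}\ge \frac{C}{(C+s_1)(C+2s_1)}$, concluding $\lambda_1^{\SH}(\sss)\le \frac{(C+s_1)(C+2s_1)}{C}$, pairs this with $g^{\SH}_1(\sss)\ge \frac{s_1}{C+s_1}$, and checks a rational function is at most $3$; in the easy case $s_1<\max_\ell s_\ell$ it simply drops the $\lambda_1 g_1$ term from the denominator. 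You instead prove the exact scaling identity $s_1 a_1'=a_1-\Phi(1)$, which I verified in both regimes: when $s_1<\max_\ell s_\ell$ the maximum is locally constant in $s_1$, $a_1$ is linear in $s_1$, and $\Phi(1)=0$; when $s_1=\max_\ell s_\ell$ the relation $\partial_{s_1}\Phi(t)=-\frac{t}{s_1}\Phi'(t)$ plus integration by parts gives $s_1a_1'=I-\Phi(1)$ (and at a tie the two one-sided derivatives agree since then $\Phi(1)=0$, so differentiability causes no trouble). This identity makes $\lambda_1^{\SH}(\sss)$ exact rather than estimated, reduces the theorem to the clean unified inequality $(2\sigma+3\rho)I-\frac{2\sigma}{\rho}\Phi(1)\ge 1$, and then needs only the elementary bounds $I\ge\frac{1}{\sigma+1}$ and $\Phi(1)\le e^{-\sigma}$ together with $2\sigma e^{-\sigma}\le 2/e<1$; your case analysis (including $\sigma\ge 1$ forced when $\rho<1$, and the degenerate corner $C=0$, where the ratio is $1$) is complete. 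What each approach buys: yours avoids the paper's ``tedious'' derivative estimates, exposes the tight regime $\rho\to 0$, $\sigma=1$ that exactly matches the lower-bound construction of Theorem~\ref{thm:SH-lower}, and would support sharper constants with better bounds on $I$ and $\Phi(1)$; the paper's version is more self-contained estimate-chasing and treats the easy case in two lines. One cosmetic note: you cite Lemma~\ref{lem:equilibrium-CC-mechanisms} to justify applying Lemma~\ref{lem:master}, but the upper bound in Lemma~\ref{lem:master} holds unconditionally --- the equilibrium condition $\sss\in\eq(\calG^M(\sss,1))$ is needed only for the matching lower bound, so that invocation is harmless but superfluous.
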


\begin{proof}
We will use Lemma \ref{lem:master} and upper-bound the ratio in the RHS of (\ref{eq:master}) by $3$. Define $C=\sum_{i\geq 2}{s_i}$. First, let $\sss\in \X_n$ with $s_1<\max_\ell{s_{\ell}}$. Let $\arg\max_{\ell}{s_\ell} = i^*\not=1$. Then, by the definition of $\SH$ and the definition of $\lambda_1^{\SH}(\sss)$ in (\ref{eq:master}), we have 
\begin{align}\label{eq:lambda}
\lambda_1^{\SH}(\sss) &= \frac{s_{i^*}}{\int_0^1{\prod_{i\geq 2}{\left(1-\frac{s_i}{s_{i^*}}t\right)}\ud t}}
\end{align}
and using the Bernoulli inequality stating that $1-\gamma t \geq (1-t)^\gamma$ for $t \leq 1$ and $\gamma \in [0,1]$, (\ref{eq:lambda}) yields
\begin{align*}
\lambda_1^{\SH}(\sss) &\leq \frac{s_{i^*}}{\int_0^1{\prod_{i \geq 2}{(1-t)^{\frac{s_i}{s_{i^*}}}} \ud t}} = \frac{s_{i^*}}{\int_0^1{(1-t)^{\frac{C}{s_{i^*}}}\ud t} }
= s_{i^*}+C.
\end{align*}
Since $\SH$ is PYS, $\sum_{i\geq 2}{p_i^\SH(\sss)}=C$. Using this observation together with the last inequality, we obtain
\begin{align}\label{eq:bound-i*-is-max}
\frac{\sum_{i\geq 2}{p^{\SH}_i(\sss)}+\lambda_1^{\SH}(\sss)}{\sum_{i\geq 2}{p^{\SH}_i(\sss)}+\lambda_1^{\SH}(\sss)\,g^{\SH}_1(\sss)} &\leq \frac{2C+s_{i^*}}{C} \leq 3.
\end{align}
The inequalities follow since $\lambda_1^{\SH}(\sss)\,g^{\SH}_1(\sss)\geq 0$, $s_1\geq 0$, and $s_{i^*}\leq C$.

Now, let $\sss\in \X_n$ with $s_1=\max_\ell{s_{\ell}}$. In this case, $g^{\SH}_1(\sss)$ is defined as 
\begin{align*}
g^{\SH}_1(\sss) &= \int_0^1{\prod_{i\geq 2}{\left(1-\frac{s_i}{s_1}t\right)}\ud t}
\end{align*}
and 
\begin{align*}
\left.\frac{\partial g^{\SH}_1(y,\sss_{-1})}{\partial y}\right|_{y=s_1} &= \int_0^1{\sum_{i\geq 2}\frac{s_i}{s_1^2}t \prod_{j \neq 1,i}{\left(1-\frac{s_j}{s_1}t\right)} \ud t} \\
&\geq \sum_{i \geq 2}\frac{s_i}{s_1^2}\int_0^1 t{\prod_{j \neq 1,i}{(1-t)^{\frac{s_j}{s_1}}} \ud t} \\
&= \sum_{i\geq 2}{\frac{s_i}{s_1^2} \int_0^1{t(1-t)^{\frac{C-s_i}{s_1}} \ud t}} \\
&= \sum_{i \geq 2}{\frac{s_i}{(C-s_i + s_1)(C-s_i+ 2s_1)}} \\
&\geq \frac{C}{(C+s_1)(C+2s_1)}.
\end{align*}
Using the definition of $\lambda_1^{\SH}(\sss)$ in (\ref{eq:master}), this last inequality implies that 
\begin{align}\label{eq:lambda-2}
\lambda_1^{\SH}(\sss) &\leq \frac{(C+s_1)(C+2s_1)}{C}.
\end{align}
Also, by applying the Bernoulli inequality to the RHS of the definition of $g^{\SH}_1(\sss)$, we obtain 
\begin{align}\label{eq:g-2}
g^{\SH}_1(\sss) &\geq \int_0^1{\prod_{i\geq 2}{\left(1-t\right)^{\frac{s_i}{s_1}}}\ud t}=\int_0^1{\left(1-t\right)^{\frac{C}{s_1}}\ud t}=\frac{s_1}{C+s_1}.
\end{align}
Now, we have
\begin{align}\label{eq:bound-1-is-max}\nonumber
\frac{\sum_{i\geq 2}{p^{\SH}_i(\sss)}+\lambda_1^{\SH}(\sss)}{\sum_{i\geq 2}{p^{\SH}_i(\sss)}+\lambda_1^{\SH}(\sss)\,g^{\SH}_1(\sss)} 
&\leq \frac{C^2+(C+s_1)(C+2s_1)}{C^2+(C+s_1)(C+2s_1)\,g^{\SH}_1(\sss)} \\
&\leq \frac{2C^2+3s_1C+2s_1^2}{C^2+s_1C+2s_1^2} \leq 3.
\end{align}
The two first inequalities follow by (\ref{eq:lambda-2}) and (\ref{eq:g-2}), respectively, and the last one is obvious since $s_1,C\geq 0$.

Now, the upper bound follows by Lemma~\ref{lem:master} using (\ref{eq:bound-i*-is-max}) and (\ref{eq:bound-1-is-max}).
\end{proof}

\begin{theorem}\label{thm:SH-lower}
The liquid price of anarchy of the $\SH$ mechanism is at least $3$.
\end{theorem}

\begin{proof}
It suffices to restrict our attention to the $2$-player version of the mechanism. Let $\sss\in \X_2$ with $s_1\leq s_2$. In this case $g^{\SH}_1(\sss)=\frac{s_1}{2s_2}$ which implies that $\lambda_1^{\SH}(\sss)=2s_2$. Since the SH mechanism belongs to class $\calC$, by Lemma~\ref{lem:equilibrium-CC-mechanisms}, we have that $\sss \in \eq(\calG^{\SH}(\sss,1))$. Using Lemma \ref{lem:master}, we obtain 
\begin{align*}
\lpoa(\SH)&\geq \sup_{\sss\in \X_2: s_1\leq s_2}{\frac{3s_2}{s_2+s_1}}=3.
\end{align*}
The proof is complete.
\end{proof}

\section{Two-player mechanisms}\label{sec:two-players}
As we saw in Theorem \ref{thm:proportional-upper}, the $\prop$ mechanism has an $\lpoa$ of exactly $2$ even in the case of two players. In contrast, our lower bound of $3/2$ for $2$-player mechanisms in Theorem~\ref{thm:lower-bound} seems to leave room for improvements. Such improvements are indeed possible as we show with the mechanisms that we present in this section. Interestingly, the $\EPYS$ mechanism that is defined in the following is also proved to have optimal $\lpoa$ among all $2$-player $\PYS$ mechanisms with concave allocation functions.

\subsection{The $\EPYS$ mechanism}\label{sec:EPYS}
Let $\beta\approx 1.792$ be the solution of the equation
$\frac{1}{\beta}-\frac{1}{\beta}\exp\left(-\frac{\beta}{\beta-1}\right) = \frac{1}{2}$
and define mechanism $\EPYS$ to be the $\PYS$ $2$-player mechanism that uses the allocation function  
\begin{align*}
g^{\EPYS}_i(\sss) &= 
\begin{cases} 
\frac{1}{\beta}-\frac{1}{\beta}\exp\left(-\frac{\beta}{\beta-1}\cdot \frac{s_i}{s_{3-i}}\right) & s_i\leq s_{3-i}\\
\frac{\beta-1}{\beta}+\frac{1}{\beta}\exp\left(-\frac{\beta}{\beta-1}\cdot \frac{s_{3-i}}{s_i}\right) & s_i> s_{3-i}
\end{cases}
\end{align*}
for player $i\in \{1,2\}$ and (non-zero) signal vector $\sss=(s_1,s_2)$. Due to the definition of $\beta$, $\EPYS$ is a well-defined resource allocation mechanism: it is anonymous, with an increasing and differentiable allocation function, which allocates the whole resource when some player has non-zero signal. Moreover, $\EPYS$ belongs to class $\calC$: the allocation function can be seen to be concave (see also Figure~\ref{fig:allocation-functions}) and the payment function is, of course, convex. The $\lpoa$ bound statement for $\EPYS$ follows.

\begin{theorem}\label{thm:E2-PYS-bound}
The liquid price of anarchy of the $\EPYS$ mechanism is $\beta \approx 1.792$.
\end{theorem}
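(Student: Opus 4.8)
The plan is to apply Lemma~\ref{lem:master} directly, exploiting that $\EPYS\in\calC$. Since the payment rule is $\PYS$, we have $\sum_{i\geq 2}p_i^{\EPYS}(\sss)=s_2$ and $p_1^{\EPYS}(y,\sss_{-1})=y$ has derivative $1$, so $\lambda_1^{\EPYS}(\sss)=\bigl(\left.\partial g_1^{\EPYS}(y,\sss_{-1})/\partial y\right|_{y=s_1}\bigr)^{-1}$; and since the allocation is concave while the payment is convex, Lemma~\ref{lem:equilibrium-CC-mechanisms} guarantees $\sss\in\eq(\calG^{\EPYS}(\sss,1))$ for every $\sss\in\X_2$. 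Hence Lemma~\ref{lem:master} holds with equality, and the whole theorem reduces to evaluating
\begin{align*}
\lpoa(\EPYS)=\sup_{s_1,s_2>0}\frac{s_2+\lambda_1^{\EPYS}(\sss)}{s_2+\lambda_1^{\EPYS}(\sss)\,g_1^{\EPYS}(\sss)}.
\end{align*}
Writing $a=\frac{\beta}{\beta-1}$, I would split into the two regimes of the piecewise definition of $g_1^{\EPYS}$ and compute $\lambda_1^{\EPYS}$ by differentiation in each.

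First I would handle the regime $s_1\leq s_2$. Setting $r=s_1/s_2$, a direct differentiation yields $\lambda_1^{\EPYS}(\sss)=(\beta-1)s_2\,e^{ar}$ and $g_1^{\EPYS}(\sss)=\frac{1}{\beta}(1-e^{-ar})$. Substituting these into the ratio and cancelling $s_2$, the denominator collapses to exactly $\frac{1}{\beta}\bigl(1+(\beta-1)e^{ar}\bigr)$, which is precisely the numerator divided by $\beta$; thus the ratio equals $\beta$ identically, for every $r\in[0,1]$ and every $s_2$. This is no accident: the exponential allocation was reverse-engineered by forcing this ratio to be the constant $\beta$, i.e.\ by solving the first-order differential equation that this constraint imposes on $g_1$. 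Because the ratio equals $\beta$ throughout this regime, the supremum is at least $\beta$, which already secures the matching lower bound.

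The hard part will be the regime $s_1>s_2$, where I must verify the ratio never exceeds $\beta$. Writing $q=s_2/s_1\in[0,1)$, one computes $\lambda_1^{\EPYS}(\sss)=\frac{(\beta-1)s_1}{q}e^{aq}$ and $g_1^{\EPYS}(\sss)=\frac{\beta-1}{\beta}+\frac{1}{\beta}e^{-aq}$. Clearing denominators, the inequality ``ratio $\leq\beta$'' simplifies, after dividing by $\beta-1>0$, to the single-variable transcendental inequality
\begin{align*}
q^2+1\ \geq\ (2-\beta)\,e^{aq}\ =\ 2\,e^{-a(1-q)},\qquad q\in[0,1],
\end{align*}
where the last equality uses the defining relation $e^{-a}=\frac{2-\beta}{2}$ of $\beta$. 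I would prove this by showing that $f(q)=q^2+1-2e^{-a(1-q)}$ is decreasing on $[0,1]$ with $f(1)=0$: differentiating, $f'(q)<0$ is equivalent to $q\,e^{a(1-q)}<a$, and the left-hand side attains its maximum at $q=1/a$ with value $\frac{1}{a}e^{a-1}$, so the whole claim reduces to the scalar fact $e^{a-1}<a^2$ (equivalently $a-1<2\ln a$), which holds for $a=\beta/(\beta-1)\approx2.263$. Hence $f\geq f(1)=0$ on $[0,1]$, giving ratio $\leq\beta$ in this regime too.

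Combining the two regimes, the supremum in Lemma~\ref{lem:master} equals $\beta$, with equality already attained throughout the regime $s_1\leq s_2$; in particular at $s_1=s_2$ the defining equation for $\beta$ makes the two branches agree at $g_1^{\EPYS}=1/2$, so the ratio is continuous across the boundary. Therefore $\lpoa(\EPYS)=\beta\approx1.792$. The heart of the argument is the scalar inequality of the second regime; everything else is the mechanical substitution enabled by Lemma~\ref{lem:master}, which removes all game-theoretic reasoning and reduces the problem to calculus.
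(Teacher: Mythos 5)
Your proposal is correct and follows essentially the same route as the paper's proof: invoke Lemma~\ref{lem:master} (with equality, via Lemma~\ref{lem:equilibrium-CC-mechanisms} and the $\PYS$ structure), observe that the ratio is identically $\beta$ in the regime $s_1\leq s_2$ by design of the allocation function, and bound it by $\beta$ in the regime $s_1>s_2$. The only difference is in the execution of the second regime, where the paper asserts monotonicity of the ratio in $s_1/s_2$ ``by tedious calculations,'' while you reduce it, via the defining relation $e^{-\beta/(\beta-1)}=(2-\beta)/2$, to the clean scalar inequality $q^2+1\geq 2e^{-a(1-q)}$ on $[0,1]$ and verify it correctly --- a nicer, fully explicit version of the same step.
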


\begin{proof}
We will prove the theorem using Lemma~\ref{lem:master}. Let $\sss\in \X_2$. Due to Lemma~\ref{lem:equilibrium-CC-mechanisms}, we have that $\sss \in \eq(\calG^{\EPYS}(\sss,1))$. Since $\EPYS$ is a PYS mechanism, we have that $p_1^\EPYS(\sss) = s_1$, which yields 
\begin{align*}
\left.\frac{\partial p_1^{\EPYS}(y,s_2)}{\partial y}\right|_{y=s_1} = 1.
\end{align*}
Next, we distinguish between two cases. First, assume that $s_1\leq s_2$; in this case, the allocation of player $1$ is 
\begin{align*}
g^{\EPYS}_1(s_1,s_2) &= \frac{1}{\beta}-\frac{1}{\beta} \exp\left(-\frac{\beta}{\beta-1}\cdot \frac{s_1}{s_2}\right)
\end{align*}
and, thus, the derivative is equal to
\begin{align*}
\left.\frac{\partial g_1^{\EPYS}(y,s_2)}{\partial y}\right|_{y=s_1} &= \frac{1}{(\beta-1)s_2} \exp\left(-\frac{\beta}{\beta-1}\cdot\frac{s_1}{s_2}\right).
\end{align*}
Therefore, $\lambda^{\EPYS}_1(\sss)$ is defined as 
\begin{align*}
\lambda^{\EPYS}_1(\sss)
&=\left(\left.\frac{\partial g^\EPYS_i(y,\sss_{-i})}{\partial y}\right|_{y=s_i}\right)^{-1}\cdot \left.\frac{\partial p^\EPYS_i(y,\sss_{-i})}{\partial y}\right|_{y=s_i} 
=(\beta-1)s_2 \exp\left(\frac{\beta}{\beta-1}\cdot\frac{s_1}{s_2}\right).
\end{align*}
By substituting $p^{\EPYS}_2(\sss)$, $\lambda^{\EPYS}_1(\sss)$, and $g^{\EPYS}_1(\sss)$ in (\ref{eq:master}), we obtain 
\begin{align}\label{eq:e2-pys-first-case}\nonumber
&\frac{p^{\EPYS}_2(\sss)+\lambda^{\EPYS}_1(\sss)}{p^{\EPYS}_2(\sss)+\lambda^{\EPYS}_1(\sss)\,g^{\EPYS}_1(\sss)} \\
&=  \frac{s_2 + (\beta-1)s_2 \exp\left(\frac{\beta}{\beta-1}\cdot\frac{s_1}{s_2}\right) }{s_2 + \frac{\beta-1}{\beta}s_2 \exp\left(\frac{\beta}{\beta-1}\cdot\frac{s_1}{s_2}\right) \left(1 -  \exp\left(-\frac{\beta}{\beta-1}\cdot \frac{s_1}{s_2}\right)\right) } 
= \beta.
\end{align}

For the second case where $s_1>s_2$, we have that 
\begin{align*}
g^{\EPYS}_1(s_1,s_2) &= \frac{\beta-1}{\beta}+\frac{1}{\beta} \exp\left(-\frac{\beta}{\beta-1}\cdot \frac{s_2}{s_1}\right)
\end{align*}
and
\begin{align*}
\left.\frac{\partial g_1^{\EPYS}(y,s_2)}{\partial y}\right|_{y=s_1} &= \frac{s_2}{(\beta-1)s_1^2} \exp\left(-\frac{\beta}{\beta-1}\cdot\frac{s_2}{s_1}\right).
\end{align*}
Now, it is
\begin{align*}
\lambda^{\EPYS}_1(\sss)
&=\left(\left.\frac{\partial g^\EPYS_i(y,\sss_{-i})}{\partial y}\right|_{y=s_i}\right)^{-1}\cdot \left.\frac{\partial p^\EPYS_i(y,\sss_{-i})}{\partial y}\right|_{y=s_i} \\
&=\frac{(\beta-1)s_1^2}{s_2} \exp\left(\frac{\beta}{\beta-1}\cdot\frac{s_2}{s_1}\right)
\end{align*}
By substituting $p^{\EPYS}_2(\sss)$, $\lambda^{\EPYS}_1(\sss)$, and $g^{\EPYS}_1(\sss)$ in (\ref{eq:master}), we obtain  
\begin{align}\label{eq:e2-pys-second-case}\nonumber
&\frac{p^{\EPYS}_2(\sss)+\lambda^{\EPYS}_1(\sss)}{p^{\EPYS}_2(\sss)+\lambda^{\EPYS}_1(\sss)\,g^{\EPYS}_1(\sss)} \\\nonumber
&= \frac{ s_2 +  \frac{(\beta-1)s_1^2}{s_2} \exp\left(\frac{\beta}{\beta-1}\cdot\frac{s_2}{s_1}\right) }{ s_2 + \frac{(\beta-1)s_1^2}{s_2} \exp\left(\frac{\beta}{\beta-1}\cdot\frac{s_2}{s_1}\right) 
\cdot \left( \frac{\beta-1}{\beta}+\frac{1}{\beta}\exp\left(-\frac{\beta}{\beta-1}\cdot \frac{s_2}{s_1}\right) \right) } \\\nonumber 
&= \frac{ s_2 +  \frac{(\beta-1)s_1^2}{s_2} \exp\left(\frac{\beta}{\beta-1}\cdot\frac{s_2}{s_1}\right) }
{ s_2 + \frac{(\beta-1)^2}{\beta}\frac{s_1^2}{s_2} \exp\left(\frac{\beta}{\beta-1}\cdot\frac{s_2}{s_1}\right) 
+\frac{\beta-1}{\beta} \frac{s_1^2}{s_2} } \\
&= \beta \frac{1+(\beta-1)\left(\frac{s_1}{s_2}\right)^2\exp\left(\frac{\beta}{\beta-1}\cdot\frac{s_2}{s_1}\right)}{\beta+(\beta-1)^2\left(\frac{s_1}{s_2}\right)^2\exp\left(\frac{\beta}{\beta-1}\cdot\frac{s_2}{s_1}\right)+(\beta-1)\left(\frac{s_1}{s_2}\right)^2}
\leq \beta.
\end{align}
The inequality follows since the quantity at its left is decreasing in $s_1/s_2$ (its derivative with respect to $s_1/s_2$ can be shown by tedious calculations to be non-positive for $s_1/s_2\geq 1$) and, hence, it is upper-bounded by its value for $s_1/s_2=1$; this is equal to $\beta$ by its definition.

The theorem follows by Lemma~\ref{lem:master} using (\ref{eq:e2-pys-first-case}) and (\ref{eq:e2-pys-second-case}).
\end{proof}

We remark that a preliminary analysis similar to the first half of the proof of Theorem \ref{thm:E2-PYS-bound} inspired the design of the $\EPYS$ mechanism (as well as that of the $\SR$ mechanism that is defined later) at first place. By keeping the allocation function as the unknown and requiring that the RHS of (\ref{eq:master}) is equal to some value $\alpha$ for all signal vectors $\sss\in \X_2$ with $s_1\leq s_2$ (this is essentially what (\ref{eq:e2-pys-first-case}) captures), we obtained a first-order differential equation which, using the appropriate conditions so that the resulting mechanism is valid, led to $\EPYS$ (for $\alpha=\beta$). Luckily, for signal vectors $\sss\in \X_2$ with $s_1>s_2$, we were able to show that the RHS of (\ref{eq:master}) is at most $\alpha$; see inequality (\ref{eq:e2-pys-second-case}). 

We now show that $\EPYS$ has optimal $\lpoa$ among $2$-player $\PYS$ mechanisms in class $\calC$. The proof makes use of Lemma~\ref{lem:master} and a simple differential inequality that involves the allocation function. 

\begin{theorem}\label{thm:PYS-optimality}
Any $2$-player $\PYS$ mechanism with a concave allocation function has liquid price of anarchy at least $\beta \approx 1.792$.
\end{theorem}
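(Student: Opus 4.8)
The plan is to invoke the lower-bound direction of Lemma~\ref{lem:master} and convert the resulting inequality into a first-order differential inequality for the allocation function, which I then integrate. Fix a $2$-player $\PYS$ mechanism $M$ with concave allocation function $g^M$. Since its payment is linear and hence convex, $M\in\calC$, so Lemma~\ref{lem:equilibrium-CC-mechanisms} gives $\sss\in\eq(\calG^M(\sss,1))$ for every $\sss\in\X_2$, and Lemma~\ref{lem:master} yields $\lpoa(M)\geq\frac{p_2^M(\sss)+\lambda_1^M(\sss)}{p_2^M(\sss)+\lambda_1^M(\sss)\,g_1^M(\sss)}$ for all $\sss$. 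I restrict attention to the one-parameter family of signal vectors $(x,1)$ with $x\in(0,1]$ and write $\phi(x)=g_1^M(x,1)$, which is increasing and differentiable. Because $M$ is $\PYS$, we have $p_2^M(\sss)=1$ and the payment derivative of player $1$ equals $1$, so $\lambda_1^M(\sss)=1/\phi'(x)$ and the displayed ratio becomes $R(x)=\frac{\phi'(x)+1}{\phi'(x)+\phi(x)}$.

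Set $\alpha=\lpoa(M)$; by Theorem~\ref{thm:lower-bound} we have $\alpha\geq 3/2>1$. Since $R(x)\leq\alpha$ for all $x$ and $\phi'>0$, this rearranges into the differential inequality $\phi'(x)\geq\frac{1-\alpha\phi(x)}{\alpha-1}$. The model conventions give the boundary value $\phi(0)=g_1^M(0,1)=0$, while anonymity forces $\phi(1)=g_1^M(1,1)=1/2$. I then compare $\phi$ with the solution $\psi_\alpha$ of the associated linear equation $\psi'=\frac{1-\alpha\psi}{\alpha-1}$, $\psi(0)=0$, namely $\psi_\alpha(x)=\frac1\alpha\bigl(1-\exp(-\frac{\alpha}{\alpha-1}x)\bigr)$ — precisely the form of $g^{\EPYS}$ when $\alpha=\beta$. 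A Gr\"onwall-type argument (set $w=\phi-\psi_\alpha$; then $w(0)=0$ and $w'+\frac{\alpha}{\alpha-1}w\geq0$, so $(w\,e^{\frac{\alpha}{\alpha-1}x})'\geq0$ and hence $w\geq0$) gives $\phi(x)\geq\psi_\alpha(x)$ on $(0,1]$.

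Evaluating at $x=1$ and using $\phi(1)=1/2$ yields $\tfrac12\geq\psi_\alpha(1)=F(\alpha)$, where $F(\alpha)=\frac1\alpha\bigl(1-\exp(-\frac{\alpha}{\alpha-1})\bigr)$. By the definition of $\beta$ we have $F(\beta)=1/2$, so it remains to show $F$ is strictly decreasing on $(1,\infty)$; then $F(\alpha)\leq F(\beta)$ forces $\alpha\geq\beta$, i.e.\ $\lpoa(M)\geq\beta$, as claimed. This monotonicity holds because $F$ is a product of the two positive decreasing factors $\frac1\alpha$ and $1-\exp(-\frac{\alpha}{\alpha-1})$ (the exponent $\frac{\alpha}{\alpha-1}$ decreases in $\alpha$, so $\exp(-\frac{\alpha}{\alpha-1})$ increases).

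I expect the delicate points to be the two boundary inputs and the comparison step, rather than any lengthy computation. The value $\phi(1)=1/2$ is the decisive structural fact and must be pinned down from anonymity; one must also record $\phi'>0$ so that forming $\lambda_1^M=1/\phi'$ and rearranging $R(x)\leq\alpha$ are legitimate, and track the sign of $\alpha-1>0$ so the inequality does not flip. Everything else — solving the linear ODE, the Gr\"onwall comparison, and verifying $F'<0$ — is routine.
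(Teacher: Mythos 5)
Your proof is correct and takes essentially the same route as the paper's: both apply Lemma~\ref{lem:master} along the one-parameter family $(x,1)$, turn the resulting bound into the differential inequality $(\alpha-1)\phi'(x)+\alpha\phi(x)\geq 1$, integrate it by a Gr\"onwall comparison with the boundary values $\phi(0)=0$ and $\phi(1)=1/2$ (the latter from anonymity), and conclude via the monotonicity of $F(\alpha)=\frac{1}{\alpha}\bigl(1-\exp\bigl(-\frac{\alpha}{\alpha-1}\bigr)\bigr)$. The only cosmetic differences are that you argue directly with $\alpha=\lpoa(M)$ rather than by contradiction with $\beta'<\beta$, and you make explicit two steps the paper leaves implicit, namely the appeal to Lemma~\ref{lem:equilibrium-CC-mechanisms} for the lower-bound direction of Lemma~\ref{lem:master} and the monotonicity of $F$ on all of $(1,\infty)$ instead of just $[1,2]$.
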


\begin{proof}
For the sake of contradiction, assume that there exists a $\PYS$ mechanism $M$ that has liquid price of anarchy $\beta'<\beta$. Denote by $f:\R_{\geq0}\rightarrow [0,1]$ the function defined as $f(y)=g^M_1(y,1)$. Then, by applying Lemma~\ref{lem:master} with $\sss = (y,1) \in \X_2$ to $M$ we have $\lambda^M_1(y,1)=1/f'(y)$ and $\lpoa(M)\geq \frac{1+1/f'(y)}{1+f(y)/f'(y)}$ for every $y\in [0,1]$. By our assumption $\lpoa(M)\leq \beta'$, we get the differential inequality
\begin{align*}
(\beta'-1)f'(y)+\beta' f(y)\geq 1
\end{align*}
for every $y\in [0,1]$. Using Gr\"onwall's inequality, $f(y)$ is lower-bounded by the solution of the corresponding differential equation. Due to the condition $f(0)=0$, this yields
\begin{align*}
f(y) &\geq \frac{1}{\beta'}-\frac{1}{\beta'}\exp\left(-\frac{\beta'}{\beta'-1} y\right)
\end{align*} 
and, hence,
\begin{align*}
\frac{1}{2}=f(1) &\geq \frac{1}{\beta'}-\frac{1}{\beta'}\exp\left(-\frac{\beta'}{\beta'-1} \right)> \frac{1}{\beta}-\frac{1}{\beta}\exp\left(-\frac{\beta}{\beta-1} \right),
\end{align*}
which contradicts the definition of $\beta$. The last inequality follows since the function $\frac{1}{z}-\frac{1}{z}\exp\left(-\frac{z}{z-1}\right)$ is decreasing in the interval $[1,2]$.
\end{proof}

\subsection{The $\SR$ mechanism}\label{sec:SR}  
Let us now define a non-$\PYS$ mechanism that has considerably better $\lpoa$ than $\EPYS$ and almost matches the lower bound of $3/2$ from Theorem~\ref{thm:lower-bound} for $2$-player mechanisms. Let
$\gamma\approx 1.529$ be the solution of the equation
$\frac{1}{\gamma}-\frac{1}{\gamma}\exp\left(-\frac{\gamma}{2(\gamma-1)}\right) = \frac{1}{2}$ and define mechanism $\SR$ to be the $2$-player mechanism that uses the allocation function (see Figure \ref{fig:allocation-functions} for a comparison of the allocation functions of $\prop$, $\SH$, $\EPYS$, and $\SR$)
\begin{align*}
g^{\SR}_i(\sss) &= 
\begin{cases} 
\frac{1}{\gamma}-\frac{1}{\gamma}\exp\left(-\frac{\gamma}{2(\gamma-1)}\cdot \left(\frac{s_i}{s_{3-i}}\right)^2\right) & s_i\leq s_{3-i}\\
\frac{\gamma-1}{\gamma}+\frac{1}{\gamma}\exp\left(-\frac{\gamma}{2(\gamma-1)}\cdot \left(\frac{s_{3-i}}{s_i}\right)^2\right) & s_i> s_{3-i}
\end{cases}
\end{align*}
and the payment function $p^{\SR}_i(\sss)=s_i/s_{3-i}$ for player $i\in \{1,2\}$ and (non-zero) signal vector $\sss=(s_1,s_2)$. By the general conventions of Section~\ref{sec:prelim}, the payments are $0$ when some of the signals are equal to zero. Due to the definition of $\gamma$, $\SR$ is a well-defined resource allocation mechanism. However, observe that $\SR$ does not belong to class $\calC$ (the allocation function is not concave; see Figure~\ref{fig:allocation-functions}) and the condition $\sss \in \eq(\calG^{\SR}(\sss,1))$ is not guaranteed to be satisfied. Next, we will prove an upper bound on the $\lpoa$ of $\SR$. The proof follows in a similar way to the proof of Theorem \ref{thm:E2-PYS-bound}, but it does not provide a tight bound.

\begin{figure}[t!]
\centering
\includegraphics[scale=0.8]{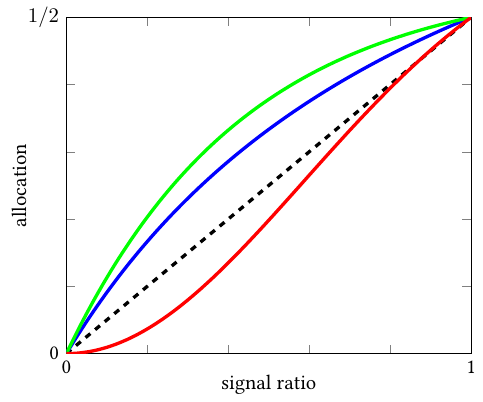}
\caption{A comparison of the allocation function $g^M_i$ used by $\EPYS$ (in green), (the $2$-player version of) $\prop$ (in blue), $\SH$ (dashed), and $\SR$ (in red) as a function of $s_i/s_{3-i}$ for $s_i\leq s_{3-i}$. Among these mechanisms, $\SR$ is the only one with a non-concave allocation function.}
\label{fig:allocation-functions}
\end{figure}

\begin{theorem}\label{thm:RS-bound}
The liquid price of anarchy of the $\SR$ mechanism is at most $\gamma \approx 1.529$.
\end{theorem}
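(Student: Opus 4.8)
The plan is to apply the upper-bound direction of Lemma~\ref{lem:master}, mirroring the proof of Theorem~\ref{thm:E2-PYS-bound}. The key point to note up front is that, although $\SR\notin\calC$ so we cannot guarantee $\sss\in\eq(\calG^{\SR}(\sss,1))$, the \emph{inequality} part of Lemma~\ref{lem:master}, namely $\lpoa(M)\leq\sup_{\sss\in\X_2}\{\cdots\}$, holds unconditionally; only the matching lower bound needs the equilibrium condition. Hence it suffices to bound the ratio on the right-hand side of~(\ref{eq:master}) by $\gamma$ for every $\sss=(s_1,s_2)\in\X_2$. Since $p^{\SR}_1(y,s_2)=y/s_2$ gives $\partial p^{\SR}_1/\partial y=1/s_2$ and $p^{\SR}_2(\sss)=s_2/s_1$, the only ingredient requiring work is $\lambda^{\SR}_1(\sss)=\bigl(\partial g^{\SR}_1/\partial y|_{y=s_1}\bigr)^{-1}\cdot(1/s_2)$, obtained by differentiating the two branches of $g^{\SR}_1$. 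Throughout I would write $a=\gamma/(2(\gamma-1))$.

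First I would treat $s_1\leq s_2$, setting $r=s_1/s_2\in(0,1]$. Differentiating the upper branch gives $\partial g^{\SR}_1/\partial y|_{y=s_1}=\frac{s_1}{(\gamma-1)s_2^2}e^{-ar^2}$, hence $\lambda^{\SR}_1(\sss)=\frac{\gamma-1}{r}e^{ar^2}$, together with $g^{\SR}_1(\sss)=\frac{1}{\gamma}(1-e^{-ar^2})$ and $p^{\SR}_2(\sss)=1/r$. Substituting into the ratio, the common factor $\tfrac1r\bigl(1+(\gamma-1)e^{ar^2}\bigr)$ cancels between numerator and denominator and the ratio collapses to exactly $\gamma$ for every $r\in(0,1]$. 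This is precisely the identity the defining differential equation of $\SR$ was engineered to produce, in direct analogy with~(\ref{eq:e2-pys-first-case}).

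The substantive case is $s_1>s_2$. Writing $\rho=s_1/s_2>1$ and differentiating the lower branch I obtain $\lambda^{\SR}_1(\sss)=(\gamma-1)\rho^3 e^{a/\rho^2}$, with $g^{\SR}_1(\sss)=\frac{\gamma-1}{\gamma}+\frac{1}{\gamma}e^{-a/\rho^2}$ and $p^{\SR}_2(\sss)=1/\rho$. Plugging these in and clearing the common factor $1/\rho$, the ratio becomes
\[
\frac{1+(\gamma-1)\rho^4 e^{a/\rho^2}}{1+\tfrac{\gamma-1}{\gamma}\rho^4+\tfrac{(\gamma-1)^2}{\gamma}\rho^4 e^{a/\rho^2}},
\]
and the goal is to show this is at most $\gamma$ for $\rho\geq1$. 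As in Theorem~\ref{thm:E2-PYS-bound}, I would show it is decreasing in $\rho$, so its supremum over $\rho\geq1$ is attained at $\rho=1$; there it equals $\gamma$ because the allocation is continuous at $s_1=s_2$ (both branches give $g^{\SR}_1=1/2$ via the defining equation of $\gamma$, equivalently $e^{-a}=1-\gamma/2$), matching the case-one value.

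The main obstacle is exactly this monotonicity claim: checking that the derivative with respect to $\rho$ is non-positive for $\rho\geq1$ is a calculus computation with no conceptual content but genuinely tedious, just as the authors flag for the parallel step in Theorem~\ref{thm:E2-PYS-bound}. A clean way to organize it is to substitute $u=1/\rho^2\in(0,1]$ and reduce the target to a single inequality $N(u)\leq\gamma D(u)$, then verify $\gamma D(u)-N(u)\geq0$ by checking equality at $u=1$ and controlling the sign of the derivative (or by factoring out $e^{au}$ and bounding the residual polynomial-times-exponential terms). Since the bound is not claimed to be tight (unlike for $\EPYS$), there is slack to exploit, so crude estimates may suffice and a full term-by-term monotonicity argument may be avoidable.
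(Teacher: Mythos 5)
Your proposal is correct and follows essentially the same route as the paper's proof: invoke the unconditional upper-bound direction of Lemma~\ref{lem:master} (correctly noting that the equilibrium condition is only needed for tightness, which is why the bound for $\SR\notin\calC$ is not claimed to be tight), compute $\lambda^{\SR}_1$ on each branch, obtain the exact identity~$\gamma$ for $s_1\leq s_2$, and for $s_1>s_2$ reduce to showing the ratio is decreasing in $s_1/s_2$ with value $\gamma$ at $s_1=s_2$ via the defining equation $e^{-\gamma/(2(\gamma-1))}=1-\gamma/2$. Your case-2 expression matches the paper's~(\ref{eq:e2-sr-second-case}) after dividing through by $\gamma$, and you leave the monotonicity step at the same level of ``tedious calculation'' as the paper does.
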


\begin{proof}
We will prove the theorem by mimicking the proof of Theorem \ref{thm:E2-PYS-bound}. Let $\sss\in \X_2$. Again, we distinguish between two cases. First, assume that $s_1\leq s_2$. Then, since
\begin{align*}
g^{\SR}_1(s_1,s_2) &= \frac{1}{\gamma}-\frac{1}{\gamma}\exp\left(-\frac{\gamma}{2(\gamma-1)}\cdot\left(\frac{s_1}{s_2}\right)^2\right),
\end{align*}
the derivative of the allocation for player $1$ is
\begin{align*}
\left.\frac{\partial g_1^{\SR}(y,s_2)}{\partial y}\right|_{y=s_1} &= \frac{s_1}{(\gamma-1)s_2^2} \exp\left(-\frac{\gamma}{2(\gamma-1)}\cdot\left(\frac{s_1}{s_2}\right)^2\right).
\end{align*}
Also, since the payment function used by $\SR$ is equal to the signal ratio, its derivative for player $1$ is 
\begin{align*}
\left.\frac{\partial p_1^{\SR}(y,s_2)}{\partial y}\right|_{y=s_1} &= \frac{1}{s_2}.
\end{align*}
Therefore,
\begin{align*}
\lambda^{\SR}_1(\sss) 
&= \left(\left.\frac{\partial g^\SR_i(y,\sss_{-i})}{\partial y}\right|_{y=s_i}\right)^{-1}\cdot \left.\frac{\partial p^\SR_i(y,\sss_{-i})}{\partial y}\right|_{y=s_i} \\
&= (\gamma-1)\frac{s_2}{s_1} \exp\left(\frac{\gamma}{2(\gamma-1)}\cdot\left(\frac{s_1}{s_2}\right)^2\right).
\end{align*}
By substituting $p^{\SR}_2(\sss)$, $\lambda^{\SR}_1(\sss)$, and $g^{\SR}_1(\sss)$ to (\ref{eq:master}), we can now easily verify that
\begin{align}\label{eq:e2-sr-first-case}
\frac{p^{\SR}_2(\sss)+\lambda^{\SR}_1(\sss)}{p^{\SR}_2(\sss)+\lambda^{\SR}_1(\sss)\,g^{\SR}_1(\sss)}&=\gamma.
\end{align}

For the second case where $s_1>s_2$, the allocation is defined as 
\begin{align*}
g^{\SR}_1(s_1,s_2) &= \frac{\gamma-1}{\gamma}+\frac{1}{\gamma} \exp\left(-\frac{\gamma}{2(\gamma-1)}\cdot \left(\frac{s_2}{s_1}\right)^2\right)
\end{align*}
and the derivative is
\begin{align*}
\left.\frac{\partial g_1^{\SR}(y,s_2)}{\partial y}\right|_{y=s_1} &= \frac{s_2^2}{(\gamma-1)s_1^3} \exp\left(-\frac{\gamma}{2(\gamma-1)}\cdot\left(\frac{s_2}{s_1}\right)^2\right).
\end{align*}
The payment derivative is again equal to $1/s_2$ and, hence, 
\begin{align*}
\lambda^{\SR}_1(\sss)
&= \left(\left.\frac{\partial g^\SR_i(y,\sss_{-i})}{\partial y}\right|_{y=s_i}\right)^{-1}\cdot \left.\frac{\partial p^\SR_i(y,\sss_{-i})}{\partial y}\right|_{y=s_i} \\
&=(\gamma-1)\left(\frac{s_1}{s_2}\right)^3 \exp\left(\frac{\gamma}{2(\gamma-1)}\cdot\left(\frac{s_2}{s_1}\right)^2\right).
\end{align*}
By substituting $p^{\SR}_2(\sss)$, $\lambda^{\SR}_1(\sss)$, and $g^{\SR}_1(\sss)$, we obtain 
\begin{align}\label{eq:e2-sr-second-case}\nonumber
&\frac{p^{\SR}_2(\sss)+\lambda^{\SR}_1(\sss)}{p^{\SR}_2(\sss)+\lambda^{\SR}_1(\sss)\,g^{\SR}_1(\sss)} \\\nonumber
&= \frac{ \frac{s_2}{s_1} + (\gamma-1)\left(\frac{s_1}{s_2}\right)^3 \exp\left(\frac{\gamma}{2(\gamma-1)}\cdot\left(\frac{s_2}{s_1}\right)^2\right) }{ \frac{s_2}{s_1} + (\gamma-1)\left(\frac{s_1}{s_2}\right)^3 \exp\left(\frac{\gamma}{2(\gamma-1)}\cdot\left(\frac{s_2}{s_1}\right)^2\right) \cdot \left( \frac{\gamma-1}{\gamma}+\frac{1}{\gamma} \exp\left(-\frac{\gamma}{2(\gamma-1)}\cdot \left(\frac{s_2}{s_1}\right)^2\right) \right) } \\
&= \gamma \frac{1+(\gamma-1)\left(\frac{s_1}{s_2}\right)^4\exp\left(\frac{\gamma}{2(\gamma-1)}\cdot\left(\frac{s_2}{s_1}\right)^2\right)}{\gamma+(\gamma-1)^2\left(\frac{s_1}{s_2}\right)^4\exp\left(\frac{\gamma}{2(\gamma-1)}\cdot\left(\frac{s_2}{s_1}\right)^2\right)+(\gamma-1)\left(\frac{s_1}{s_2}\right)^4}
\leq \gamma.
\end{align}
The inequality follows because the quantity at its left is decreasing in $s_1/s_2$ (its derivative with respect to $s_1/s_2$ can be shown by tedious calculations to be non-positive for $s_1/s_2\geq 1$) and, hence, it is upper-bounded by its value for $s_1/s_2=1$; this is equal to $\gamma$ by its definition.

The theorem follows by Lemma~\ref{lem:master} using (\ref{eq:e2-sr-first-case}) and (\ref{eq:e2-sr-second-case}).
\end{proof}

\subsection{The $\SHSR$ mechanism}
Interestingly, we now show that there exists a simple $2$-player mechanism that is part of class $\calC$ and achieves a tight $\lpoa$ equal to the golden ratio $\phi \approx 1.618$, improving upon the bound of $1.792$ achieved by $\EPYS$ for mechanisms in $\calC$. This mechanism uses the allocation function of $\SH$ and the signal-ratio payment function of $\SR$. In the following, we will refer to this $2$-player mechanism as $\SHSR$.

\begin{theorem}\label{thm:SHSR-bound}
The liquid price of anarchy of the $\SHSR$ mechanism is $\phi \approx 1.618$.
\end{theorem}

\begin{proof}
Again, we will prove the theorem using Lemma~\ref{lem:master}. Let $\sss\in \X_2$. Due to Lemma~\ref{lem:equilibrium-CC-mechanisms}, since the allocation function is concave and the payment function is convex, we have that $\sss \in \eq(\calG^{\SHSR}(\sss,1))$. By the definition of the payment function, we have that $p_1^\SHSR(\sss) = \frac{s_1}{s_2}$, which yields that
\begin{align*}
\left.\frac{\partial p_1^{\SHSR}(y,s_2)}{\partial y}\right|_{y=s_1} = \frac{1}{s_2}.
\end{align*}
Next, we distinguish between two cases. First, assume that $s_1\leq s_2$. Then, the allocation of player $1$ is $g_1^\SHSR(s_1,s_2) = \frac{s_1}{2s_2}$, which yields that
\begin{align*}
\left.\frac{\partial g_1^{\SHSR}(y,s_2)}{\partial y}\right|_{y=s_1} &= \frac{1}{2s_2}.
\end{align*}
Therefore, $\lambda^{\SHSR}_1(\sss) = 2$. By setting $z = \frac{s_1}{s_2}$, and substituting $p^{\SHSR}_2(\sss)$, $\lambda^{\SHSR}_1(\sss)$, and $g^{\SHSR}_1(\sss)$ in (\ref{eq:master}), we obtain 
\begin{align}\label{eq:SHSR-first-case}
\frac{p^{\SHSR}_2(\sss)+\lambda^{\SHSR}_1(\sss)}{p^{\SHSR}_2(\sss)+\lambda^{\SHSR}_1(\sss)\,g^{\SHSR}_1(\sss)} &= \frac{s_2^2 + 2s_1s_2}{s_2^2 + s_1^2} = \frac{1+2z}{1+z^2} \leq \phi.
\end{align}
The inequality follows since the maximum value of the expression $\frac{1+2z}{1+z^2}$ is $\phi$, which occurs at $z=\phi-1$.

For the second case where $s_1>s_2$, since $g_1(s_1,s_2) = 1-\frac{s_2}{2s_1}$, we have that  
\begin{align*}
\left.\frac{\partial g_1^{\SHSR}(y,s_2)}{\partial y}\right|_{y=s_1} &= \frac{s_2}{2s_1^2}
\end{align*}
and, therefore, $\lambda^{\SHSR}_1(\sss)=2\left(\frac{s_1}{s_2}\right)^2$.
By setting $z = \frac{s_2}{s_1}$, and substituting $p^{\SHSR}_2(\sss)$, $\lambda^{\SHSR}_1(\sss)$, and $g^{\SHSR}_1(\sss)$ in (\ref{eq:master}), we obtain  
\begin{align}\label{eq:SHSR-second-case}\nonumber
\frac{p^{\SHSR}_2(\sss)+\lambda^{\SHSR}_1(\sss)}{p^{\SHSR}_2(\sss)+\lambda^{\SHSR}_1(\sss)\,g^{\SHSR}_1(\sss)} &= \frac{ \frac{s_2}{s_1} + 2 \left(\frac{s_1}{s_2}\right)^2 }{ \frac{s_2}{s_1} + 2 \left(\frac{s_1}{s_2}\right)^2 \left( 1- \frac{s_2}{2s_1}\right) } \\
&= 1 + \frac{z}{z^3-z+2} \leq \phi.
\end{align}
Here, the inequality follows since the maximum value of the expression $\frac{z}{z^3-z+2}$ is $1/2$, which occurs for $z=1$. 

The theorem follows by Lemma~\ref{lem:master} using (\ref{eq:SHSR-first-case}) and (\ref{eq:SHSR-second-case}).
\end{proof}

\section{Open problems and possible extensions}\label{sec:extensions}
Even though we have revealed an almost complete picture on the liquid price of anarchy of resource allocation mechanisms, our work leaves an interesting open problem: is the $2-1/n$ bound achievable (as opposed to $2$), preferably by a simple mechanism? In particular, is there a mechanism with a proportional allocation function and appropriate non-PYS payments that achieves this $\lpoa$ bound? This question seems technically challenging even for the case of two players only. 

Regarding the liquid price of anarchy over more general equilibrium concepts (e.g., correlated equilibria) or settings with incomplete information (and Bayes-Nash equilibria), is the Kelly mechanism still optimal within low-order terms? We are far from answering this question. The papers by \citet{CV16} and by \citet{CST16} present $\lpoa$ bounds for Kelly, but these are not known to be tight. We conjecture that the proof of tight $\lpoa$ bounds over more general equilibrium concepts for any resource allocation mechanism should exploit the structure of worst-case games and equilibria as we did in the current paper for pure Nash equilibria. Unfortunately, extending our characterization from Section \ref{sec:structure} to more general equilibrium concepts seems elusive at this point.

There are several interesting extensions of our setting that could be considered. {\em Budget-aware} mechanisms, which have access to the budget value of each player, constitute a first such extension. Of course, our analysis for mechanisms $\prop$, $\SH$, $\EPYS$, and $\SR$ carries over to this case. In contrast, our lower bound (Theorem~\ref{thm:lower-bound}) is not true anymore. The proof constructs two games, in which almost every player has different budgets. The main property we have exploited in that proof (for non-budget-aware mechanisms) is that the strategic behavior of the players results in the same set of equilibria in both games. This argument fails for budget-aware mechanisms; a small change in the budget of a single player could be enough to alter the set of equilibria. So, in principle, one might hope even for full efficiency at equilibria (i.e., $\lpoa$ equal to $1$) in this case, analogously to the results of \citet{MB06}, \citet{JT09}, and \citet{YH07} in the no-budget setting. Interestingly, our next statement rules out this possibility. 

\begin{theorem}\label{thm:lower-bound-known}
For $n\geq 2$, every $n$-player budget-aware resource allocation mechanism has liquid price of anarchy at least $4/3$.
\end{theorem}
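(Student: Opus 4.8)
The plan is to adapt the two–game construction behind Theorem~\ref{thm:lower-bound}, but to keep the \emph{budget profile identical} across the two games, so that a budget–aware mechanism cannot tell them apart. First I would reduce to $n=2$: for larger $n$ I add $n-2$ ``null'' players with valuation identically $0$ and budget $0$. Such a player signals $0$ at every equilibrium (a positive signal forces a positive payment, which her zero budget cannot afford) and contributes $\min\{0,0\}=0$ to the liquid welfare regardless of her allocation; hence she neither helps the benchmark nor interferes strategically, and the mechanism restricted to the two active players is an arbitrary budget–aware two–player mechanism.

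Fix a budget–aware mechanism $M$ and let $\calG^M_1$ be the game with two players, each with valuation $v_i(x)=x$ and budget $c_i=1$. Let $\sss$ be an equilibrium of $\calG^M_1$ and $\dd=g^M(\sss)$ the induced allocation; since the two players are identical, anonymity lets me relabel them so that $d_1\le d_2$, whence $d_1\le 1/2$. I then define $\calG^M_2$ to be identical to $\calG^M_1$ — same mechanism, same budgets $(1,1)$ — except that player $2$'s valuation is raised by an additive constant to $\tilde v_2(x)=1+x$, while player $1$ keeps $v_1(x)=x$. The crucial point, and exactly the place where budget–awareness is neutralised, is that the budget profile is unchanged, so $M$ realises the \emph{same} allocation and payment functions in both games. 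Since player $2$'s utility in $\calG^M_2$ equals her utility in $\calG^M_1$ plus the constant $1$, and her budget (hence her feasible set) is unchanged, her best–response correspondence is unaffected; player $1$ is untouched. Consequently $\sss$ remains an equilibrium of $\calG^M_2$ and realises the same allocation $\dd$.

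It then remains to compute. In $\calG^M_2$ player $2$ always has value at least $1$, so her contribution is capped at her budget $1$, while player $1$ contributes $\min\{d_1,1\}=d_1$ because $d_1\le 1/2$; thus $\LW(\dd,\calG^M_2)=d_1+1$. The allocation giving the whole resource to player $1$ yields liquid welfare $\min\{1,1\}+1=2$, which is also the maximum possible since $\LW$ is bounded by $c_1+c_2=2$. Therefore
\begin{align*}
\lpoa(M)\ \ge\ \lpoa(\calG^M_2)\ \ge\ \frac{\LW^*(\calG^M_2)}{\LW(\dd,\calG^M_2)}\ =\ \frac{2}{d_1+1}\ \ge\ \frac{2}{3/2}\ =\ \frac{4}{3},
\end{align*}
using $d_1\le 1/2$ (if in fact player $1$ is inactive in $\sss$, i.e.\ $d_1=0$, the ratio is even $2$).

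I expect the delicate points to be bookkeeping rather than conceptual. The existence of the equilibrium $\sss$ of $\calG^M_1$ and the legitimacy of relabelling to force $d_1\le 1/2$ follow from the standing assumptions and anonymity, exactly as in Theorem~\ref{thm:lower-bound}; for $n>2$ one must also confirm the null players are pinned to the zero signal. The one genuinely conceptual obstacle, relative to Theorem~\ref{thm:lower-bound}, is that there we capped player $2$'s value at her \emph{allocation} $d_2$ by setting her budget to $d_2$, which is now forbidden because the mechanism sees budgets; holding the common budget fixed at $1$ forces the value cap up to $1$ and is precisely what degrades the bound from $2-1/n$ to $4/3$. The main thing to verify carefully is thus the invariance of the equilibrium under the additive shift of $v_2$ combined with the \emph{unchanged} budget — that altering only a mechanism–invisible valuation cannot destroy an equilibrium — which is what makes the argument go through against every budget–aware mechanism.
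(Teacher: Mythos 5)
Your proposal is correct and is essentially the paper's own proof: the same pair of games with the budget profile held fixed at $(1,\dots,1)$, the same additive shift $\tilde v_2(x)=1+x$ (invisible to a budget-aware mechanism and leaving player $2$'s best responses unchanged), and the same computation $\LW^*(\calG^M_2)/\LW(\dd,\calG^M_2) = 2/(d_1+1)\geq 4/3$ using $d_1\leq 1/2$. The only divergence is your reduction to $n=2$ via budget-$0$ null players, which the paper avoids by simply giving players $i\geq 3$ valuation $v_i(x)=0$ (with budget $1$) inside the $n$-player game — this sidesteps your slightly fragile claim that null players are pinned to the zero signal at every equilibrium (by the sole-nonzero-signaler convention, a budget-$0$ player pays nothing when all others signal zero, so she need not be pinned there), though the flaw is immaterial because your two-game computation goes through verbatim in the $n$-player game: zero-valuation players contribute $0$ to the liquid welfare regardless of how they behave.
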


\begin{proof}
Let $M$ be any $n$-player budget-aware resource allocation mechanism that uses an allocation function $g^M$ and a payment function $p^M$. Let $\sss=(s_1, ..., s_n)$ be an equilibrium of the game $\calG^M_1$ induced by $M$ for players with valuations $v_i(x)=x$ for $i \in \{1,2\}$ and $v_i(x) = 0$ for $i\geq 3$, and budgets $c_i=1$ for every $i \in [n]$. Assume that the allocation returned by $M$ at this equilibrium is $\dd = (d_1, ..., d_n)$. Without loss of generality, we may assume that one of the first two players (say, player $1$) gets a resource share of at most $1/2$.

Recall that, for every signal vector $\yy$, the utility of any player $i$ is defined as $u^M_i(\yy)=v_i(g^M_i(\yy))-p^M_i(\yy)$. Now, consider the game $\calG^M_2$ where player $2$ has the modified valuation function $\tilde{v}_2(x)=1+x$ while all other players are as in $\calG^M_1$; the budgets are the same in both games and are known to the mechanism. Observe that the modified utility of player $2$ is now $\tilde{u}^M_2(\yy)=\tilde{v}_2(g^M_2(\yy))-p^M_2(\yy)=u^M_2(\yy)+1$. Hence, $\sss$ is an equilibrium in $\calG^M_2$ as well and $M$ returns the same allocation $\dd$ again. 

Clearly, due to the definition of the valuation functions, the contribution of players $i\geq 3$ in the liquid welfare (in any state of the game) is zero. Hence, the liquid welfare at equilibrium is $\min\{\tilde{v}_1(d_1),c_1\} + \min\{\tilde{v}_2(d_2),c_2\} = d_1 + 1 \leq 3/2$, while the optimal liquid welfare is equal to $2$, achieved at the allocation according to which the whole resource is given to player $1$. We conclude that the liquid price of anarchy of $M$ is $\lpoa(M) \geq \lpoa(\calG^M_2) \geq 4/3$, as desired. 
\end{proof}

In spite of the lower bound in Theorem \ref{thm:lower-bound-known}, whether budget-aware resource allocation mechanisms can have an $\lpoa$ better than $2-1/n$ is an important open problem. This seems to be a technically non-trivial and extremely challenging task though.

Another possible extension of our setting could be to allow the players to declare their budgets to the mechanism in addition to their scalar signal. Taking this approach to its extreme, one could imagine resource allocation mechanisms which ask the players to submit {\em multi-dimensional} signals. At first glance, this seems to lead to much more powerful mechanisms than the ones we have considered here (since a signal consisting of many scalars could allow a player to encode her valuation function and budget). Surprisingly, this {\em higher level of expressiveness}~\citep{DFP18} has no consequences to the $\lpoa$ at all and our lower bound of $2-1/n$ captures such mechanisms as well. Indeed, by inspecting the two games used in the proof of Theorem \ref{thm:lower-bound}, we can verify that the same signal vector (no matter whether signals are single- or multi-dimensional) leads to the same allocation by the mechanism and the same strategic behavior of the players in both games. This observation applies to the proof of Theorem \ref{thm:lower-bound-known} as well.

Finally, we believe that the liquid welfare is an appropriate efficiency benchmark for auctions with budget-constrained players. The recent paper by \citet{AFGR15} studies the $\lpoa$ of simultaneous first-price auctions; obtaining similar results for other auction formats (e.g., see the recent survey of \citet{RST16}) is certainly important. Needless to say, we do not expect that the liquid welfare is unique as a measure of efficiency in settings with budgets. Defining alternative efficiency benchmarks and studying the price of anarchy with respect to them would shed extra light to the strengths and weaknesses of auction mechanisms.

\section*{Acknowledgments}
This work has been partially supported by Caratheodory research grant E.114 from the University of Patras, by a PhD scholarship from the Onassis Foundation, and by the European Research Council (ERC) under grant number 639945 (ACCORD).

\bibliographystyle{named}
\bibliography{resource}

\end{document}